\newcommand{\Spl}{\ensuremath{{\mathrm{Spl}}}} 
\renewcommand{\S}{\ensuremath{\mathcal{S}}} 
\newcommand{\Mred}{\ensuremath{\widetilde{M}}} 
\newcommand{\nred}{\ensuremath{\widetilde{n}}} 
\newcommand{\Sred}{\ensuremath{\widetilde{\S}}} 
\newcommand{\Hred}{\ensuremath{\widetilde{H}}} 
\newcommand{\Xred}{\ensuremath{\widetilde{X}}} 
\newcommand{\Ered}{\ensuremath{\widetilde{E}}} 
\newcommand{\cred}{\ensuremath{\widetilde{c}}} 
\newcommand{\QED}{\hfill$\Box$}
\begin{document}
\title{Checking Phylogenetic Decisiveness in Theory and in Practice}
%
%
\author{Ghazaleh Parvini\inst{1}
\and
Katherine Braught\inst{1}
\and
David Fern\'andez-Baca\inst{1}
}
\authorrunning{Parvini et al.}
%
\institute{Department of Computer Science, Iowa State University, Ames IA 50011, USA \\
\email{\{ghazaleh,kbraught,fernande\}@iastate.edu}}
\maketitle              
\begin{abstract}
Suppose we have a set $X$ consisting of $n$ taxa and we are given information from $k$ loci from which to construct a phylogeny for $X$.  Each locus offers information for only a fraction of the taxa.  The question is whether this data suffices to construct a reliable phylogeny.  The decisiveness problem expresses this question combinatorially. Although a precise  characterization of decisiveness is known, the complexity of the problem is open.  Here we relate decisiveness to a hypergraph coloring problem.  We use this idea to (1) obtain lower bounds on the amount of coverage needed to achieve decisiveness, (2) devise an exact algorithm for decisiveness, (3) develop problem reduction rules, and use them to obtain efficient algorithms for inputs with few loci, and (4) devise an integer linear programming formulation of the decisiveness problem, which allows us to analyze data sets that arise in practice.

\keywords{Phylogenetic tree  \and Taxon coverage \and Algorithms.}
\end{abstract}
%
%
%


\section{Introduction}  
\label{sec:incTaxCov}

Missing data poses a challenge to assembling phylogenetic trees.  The question we address here is how much data can one afford to miss without compromising accuracy.  We focus on data sets assembled by concatenating data from many (sometimes thousands) of loci \cite{JarvisEtAl_BirdPhyl2014,WickettEtAl_PhylotranscriptomicAnalysis2014,HinchliffPNAS2015}. Such data sets are used to construct phylogenetic trees by  either (i) combining the data from all the loci into a single  \emph{supermatrix}  that is then used as input to some standard phylogeny construction method (e.g., \cite{GuindonGascuel_SimpleML2003,stamatakis2014raxml}) or (ii)  taking phylogenetic trees computed separately for each locus and combining them into a single \emph{supertree} that summarizes their information  \cite{BinindaEmonds04,Scornavacca:2009,WarnowSupertrees2018}.
For various reasons, the \emph{coverage density} of concatenated datasets --- i.e., the ratio of the amount of available data to the maximum possible amount --- is often much less than $1$ \cite{SandersonMS2010}. 
Reference \cite{DobrinZwicklSanderson2018} examines a wide range of phylogenetic analyses using concatenated data sets, and reports coverage densities ranging from 0.06 to 0.98, with the majority being under 0.5.

Low coverage density can give rise to ambiguity \cite{Wilkinson_CopingMissingEntries1995,StamatakisAlachiotis_Time2010,SandersonTerracesScience2011}.  In supertree analyses, ambiguity manifests itself in multiple supertrees that are equivalent with respect to the method upon which they are based.  In super-matrix analyses, it is manifested in multiple topologically different, but co-optimal (in terms of parsimony or likelihood scores) trees.  Note that high coverage density does not, by itself, guarantee lack of ambiguity.  More important is the coverage pattern itself.  The question is whether one can identify conditions under which a given coverage pattern guarantees a unique solution.  Sanderson and Steel \cite{SteelSanderson2010,SandersonTerracesScience2011} have proposed a formal approach to studying this question, which we explain next.

A \emph{taxon coverage pattern} for a taxon set $X$ is a collection of sets $\S = \{Y_1, Y_2, \dots , Y_k\}$, where, for each $i \in \{1, 2, \dots , k\}$, $Y_i$ is a subset of $X$ consisting of the taxa for which locus $i$ provides information. $\S$ is \emph{decisive} if it satisfies the following property: Let $T$ and $T'$ be two binary phylogenetic trees for $X$ such that, for each $i \in \{1,2, \dots , k\}$, the restrictions of $T$ and $T'$ to $Y_i$ are isomorphic (restriction and isomorphism are defined in Section \ref{sec:prelims}). Then, it must be the case that $T$ and $T'$ are isomorphic.  
The \emph{decisiveness problem} is: Given a taxon coverage pattern $\S$, determine whether or not $\S$ is decisive.
Intuitively, if a taxon coverage pattern $\S$ is \emph{not} decisive, we have ambiguity.  That is, there are at least two trees that cannot be distinguished from each other by the subtrees obtained when these trees are restricted to the taxon sets in $\S$.

The complexity of the decisiveness problem has been surprisingly hard to settle, and, to our knowledge, remains an open question.  
A necessary and sufficient condition --- the four-way partition property --- for a coverage pattern to be decisive is known \cite{SteelSanderson2010,SandersonTerracesScience2011} (see also Section \ref{sec:prelims}). However, it is not clear how to test this condition efficiently.  On the positive side, the \emph{rooted} case, where at least one taxon for which every locus offers data, is known to be polynomially solvable, and software for it is available \cite{ZhbannikovDecisivator2013}.  \emph{Groves} \cite{AneEulenstein+:2009,FischerGroves2013} are a related, but not identical, notion.  For a discussion on the relationship between groves and decisiveness, see \cite{SandersonMS2010}.

\paragraph{Contributions.}
In Section \ref{sec:prelims}, we define decisiveness precisely, and review some earlier results, including the four-way partition property.  In Section \ref{sec:hyper}, we study the relationship between decisiveness and a hypergraph coloring problem --- the no-rainbow coloring problem. In Section \ref{sec:bound} we derive a lower bound on the amount of coverage needed to achieve decisiveness. Using the four-way partition property na\"ively leads to a $O(4^n)$ algorithm for decisiveness.  In Section \ref{sec:exponalg}, we show give a considerably faster exact algorithm for decisiveness.  Section \ref{sec:redFPT} studies reduction rules that allow us to compress an instance of the decisiveness problem to a smaller, but equivalent, instance.  One consequence is that we can show that the decisiveness problem is fixed-parameter tractable in the number of loci.  Section \ref{sec:ilp} presents an integer linear programming (ILP) formulation of the decisiveness problem, along with some experimental results using this formulation.  We also show that the ILP approach can be used to obtain subsets of taxa for which the given data is decisive.  Section \ref{sec:discussion} gives some concluding remarks.


\section{Preliminaries\label{sec:prelims}} 

Throughout the rest of this paper, $X$ denotes a set of taxa, $n$ denotes $|X|$, and, for any positive integer $q$, $[q]$ denotes the set $\{1, 2, \dots, q\}$.

\paragraph{Phylogenetic trees.}
A \emph{phylogenetic $X$-tree} \cite{SempleSteel03,SteelPhylogeny2016} is a tree $T$ with leaf set $X$, where every internal vertex has degree at least three.   Biologists are often interested in \emph{rooted} trees, where the root is considered as the origin of species and edges are viewed as being directed away from the root, indicating direction of evolution.  Note, however, that most phylogeny construction methods produce unrooted trees.

A \emph{split} of taxon set $X$ is a bipartition $A|B$ of $X$ such that $A, B \neq \emptyset$.  
Let $T$ be a phylogenetic $X$-tree.  Each edge $e$ in $T$ defines a split $\sigma_T(e) = A|B$, where $A$ and $B$ are the subsets of $X$ lying in each component of $T - e$.  $\Spl(T)$ denotes the set $\{\sigma_e : e \in E(T)\}$.  
It is well-known that a phylogenetic $X$-tree $T$ is completely determined by $\Spl(T )$
\cite[Theorem 3.5.2]{SempleSteel03}.  Two $X$-trees $T$ and $T'$ are \emph{isomorphic} if $\Spl(T) = \Spl(T')$.


Let $T$ be a phylogenetic $X$-tree, and suppose $Y \subseteq X$. The \emph{restriction} of $T$ to $Y$, denoted by $T |Y$, is the phylogenetic $Y$-tree where $\Spl(T | Y) = \{A \cap  Y | B \cap Y : A|B \in \Spl(T) \text{ and } A \cap Y,  B \cap Y \neq \emptyset \}.$
Equivalently, $T | Y$ is obtained from the minimal subtree of $T$ that connects $Y$ by suppressing all vertices of degree two that are not in $Y$.




\paragraph{Decisiveness.}
A taxon coverage pattern $S$ for $X$ is \emph{phylogenetically decisive} if it satisfies the following property: If $T$ and $T'$ are binary phylogenetic $X$-trees, with $T|Y = T'|Y$ for all $Y \in S$, then $T = T'$. In other words, for any binary phylogenetic $X$-tree $T$, the collection $\{T|Y : Y \in S\}$ uniquely determines $T$ (up to isomorphism).  The \emph{decisiveness problem} is the problem of determining whether a given coverage pattern is decisive.

Let $Q_S$ denote the set of all quadruples from $X$ that lie in at least one set in $S$. That is: $Q_S = \bigcup_{Y \in S} {X \choose 4}$.
A collection $S$ of subsets of $X$ satisfies the \emph{four-way partition property} (for $X$) if, for all partitions of $X$ into four disjoint, nonempty sets $A_1,A_2,A_3,A_4$ (with $A_1 \cup A_2 \cup A_3 \cup A_4 =X$) there exists $a_i \in A_i$ for $i \in \{1,2,3,4\}$ for which $\{a_1, a_2, a_3, a_4\} \in Q_S$.  

\begin{theorem}[\cite{SteelSanderson2010}]\label{thm:decisiveness}
A taxon coverage pattern $\S$ for $X$ is phylogenetically decisive if and only if $S$ satisfies the four-way partition property for $X.$
\end{theorem}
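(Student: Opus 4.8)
The plan is to prove both directions by arguing with splits, using two classical facts about phylogenetic $X$-trees \cite{SempleSteel03}: an $X$-tree is determined by its split set, and a set of $X$-splits equals $\Spl(T)$ for some $X$-tree $T$ exactly when the splits are pairwise compatible; in particular a binary $X$-tree already carries the maximum number $2n-3$ of splits, so it admits no proper refinement. I will also repeatedly use the restriction formula from Section~\ref{sec:prelims}: $\Spl(T|Y)$ consists precisely of the nontrivial restrictions $A\cap Y\,|\,B\cap Y$ of the splits $A|B\in\Spl(T)$, and restriction is transitive, so $(T|Y)|q=T|q$ whenever $q\subseteq Y$.

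For the ``only if'' direction I would argue the contrapositive. Suppose some partition $X=A_1\sqcup A_2\sqcup A_3\sqcup A_4$ into nonempty parts has no transversal quadruple $\{a_1,a_2,a_3,a_4\}$ (with $a_i\in A_i$) lying in $Q_{\S}$. Pick an arbitrary binary $A_i$-tree $R_i$ for each $i$, and let $T$ and $T'$ be the binary $X$-trees obtained by joining $R_1,R_2,R_3,R_4$ around a single central edge in the patterns $(A_1A_2\,|\,A_3A_4)$ and $(A_1A_3\,|\,A_2A_4)$; formally, their split sets are the extended splits of the $R_i$ together with the four splits $A_i|(X\setminus A_i)$ and one central split, and one checks these families are pairwise compatible. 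Then $\Spl(T)$ and $\Spl(T')$ differ only in the central split, so $T\ne T'$; and a short case analysis shows that whenever $Y$ misses one of the four parts, both central splits restrict to $Y$ either trivially or to a split already induced by an edge common to $T$ and $T'$, so $\Spl(T|Y)=\Spl(T'|Y)$, i.e.\ $T|Y=T'|Y$. Finally, no $Y\in\S$ can meet all four parts: picking one element of $Y$ from each part would give a transversal quadruple contained in $Y$, hence in $Q_{\S}$. Thus $T|Y=T'|Y$ for every $Y\in\S$ while $T\ne T'$, so $\S$ is not decisive.

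For the ``if'' direction, again by contraposition, assume $T\ne T'$ are binary $X$-trees with $T|Y=T'|Y$ for all $Y\in\S$. Since $|\Spl(T)|=|\Spl(T')|=2n-3$, there is a split $A|B\in\Spl(T)\setminus\Spl(T')$; and since $T'$ is binary it cannot be refined, so $A|B$ must be incompatible with some $C|D\in\Spl(T')$, i.e.\ $A\cap C$, $A\cap D$, $B\cap C$, $B\cap D$ are all nonempty. Take these four sets as $A_1,A_2,A_3,A_4$; they partition $X$ into four nonempty parts. For any transversal $q=\{a_1,a_2,a_3,a_4\}$, we have $a_1,a_2\in A$ and $a_3,a_4\in B$, so $\Spl(T|q)=\{\,\{a_1,a_2\}|\{a_3,a_4\}\,\}$, whereas $a_1,a_3\in C$ and $a_2,a_4\in D$ give $\Spl(T'|q)=\{\,\{a_1,a_3\}|\{a_2,a_4\}\,\}$; these differ, so $T|q\ne T'|q$. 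If $q$ were contained in some $Y\in\S$ we would get $T|q=(T|Y)|q=(T'|Y)|q=T'|q$, a contradiction; so no transversal quadruple lies in $Q_{\S}$, and this partition witnesses the failure of the four-way partition property.

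The routine parts are the two split bookkeeping checks: that the central splits restrict harmlessly in the first direction, and that a split of a binary tree, restricted to a transversal quadruple, yields exactly the stated quartet in the second. The step I expect to be the genuine obstacle is the construction of the partition in the ``if'' direction: one has to see that the mere inequality $T\ne T'$ of two \emph{binary} trees forces not just a missing split but an explicitly incompatible pair of splits, and that the four cross-intersections of such a pair are precisely the four blocks that defeat the four-way partition property.
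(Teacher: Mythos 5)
Your proposal is correct. Note, however, that the paper does not prove this theorem at all: it is imported from Steel and Sanderson \cite{SteelSanderson2010}, so there is no in-paper argument to compare against. What you have written is essentially a sound reconstruction of the original argument from that reference: for necessity, the ``swap across a central edge'' construction (two binary trees obtained by amalgamating binary trees $R_1,\dots,R_4$ on the blocks $A_1,\dots,A_4$ with the two different central splits), together with the observation that any $Y\in\S$ meeting all four blocks would yield a transversal quadruple in $Q_\S$; for sufficiency, the passage from $T\neq T'$ to an incompatible pair $A|B\in\Spl(T)$, $C|D\in\Spl(T')$ whose four cross-intersections defeat the partition property, using that the split set of a binary $X$-tree is a maximal pairwise-compatible system and hence admits no compatible extension. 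Both delicate points you flag are handled correctly: the restricted central splits coincide with restrictions of the edge splits $A_p\,|\,X\setminus A_p$ once $Y$ misses a block, and on a transversal quadruple $q$ the restrictions of $A|B$ and $C|D$ give the conflicting quartets $a_1a_2|a_3a_4$ and $a_1a_3|a_2a_4$, which combined with $(T|Y)|q=T|q$ yields the contradiction. So your write-up is a valid stand-alone proof of the cited theorem.
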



\begin{corollary}
The decisiveness problem is in co-NP. 
 \end{corollary}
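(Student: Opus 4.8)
The plan is to show that the \emph{complement} of the decisiveness problem lies in NP, by exhibiting a polynomially sized, polynomially checkable certificate for every coverage pattern that is \emph{not} decisive. The starting point is Theorem~\ref{thm:decisiveness}: a coverage pattern $\S$ fails to be decisive if and only if it violates the four-way partition property, i.e., if and only if there is a partition of $X$ into four nonempty, pairwise disjoint sets $A_1, A_2, A_3, A_4$ such that no transversal $\{a_1, a_2, a_3, a_4\}$ with $a_i \in A_i$ belongs to $Q_\S$. Such a partition is the natural witness; it is described by $O(n)$ symbols, hence has size polynomial in the input.

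Next I would spell out the verification procedure. Given a purported witness $(A_1, A_2, A_3, A_4)$, first confirm in $O(n)$ time that the $A_i$ are nonempty, pairwise disjoint, and have union $X$. Then iterate over all transversals $\{a_1, a_2, a_3, a_4\}$ with $a_i \in A_i$; for each one, test whether it lies in $Q_\S$ by checking, for each $Y \in \S$, whether $Y \supseteq \{a_1, a_2, a_3, a_4\}$. The verifier accepts the certificate (thereby certifying that $\S$ is not decisive) precisely when every transversal fails this membership test. Correctness of the verifier is exactly the ``if and only if'' of Theorem~\ref{thm:decisiveness}.

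The one point that needs an explicit argument --- and the only real obstacle, a mild one --- is bounding the running time. The number of transversals is $|A_1|\cdot|A_2|\cdot|A_3|\cdot|A_4|$, and since the $A_i$ are disjoint subsets of an $n$-element set, this product is at most $(n/4)^4 = O(n^4)$. Each membership test in $Q_\S$ costs $O(kn)$ in the naive implementation (and can be accelerated by preprocessing the sets of $\S$ into a suitable lookup structure), so the verifier runs in time polynomial in $n$ and $k$. Hence non-decisiveness is in NP, and therefore the decisiveness problem is in co-NP.
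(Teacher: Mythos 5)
Your proposal is correct and follows the same route as the paper: the certificate for non-decisiveness is a four-way partition violating the four-way partition property of Theorem~\ref{thm:decisiveness}, checked in polynomial time. Your write-up merely adds the explicit verifier and the $O(n^4)$ bound on transversals, which the paper leaves implicit.
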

 
 \begin{proof}
 A certificate for \emph{non}-decisiveness is a partition of $X$ into four disjoint, nonempty sets $A_1,A_2,A_3,A_4$, such that there is no quadruple $\{a_1,a_2,a_3,a_4\} \in Q_\S$ where $a_i \in A_i$ for each $i \in \{1,2,3,4\}$. \QED
 \end{proof}
 
 \begin{conjecture}
The decisiveness problem is co-NP-complete.
 \end{conjecture}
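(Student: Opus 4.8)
Since the decisiveness problem is already known to lie in co-NP, settling the conjecture amounts to proving it co-NP-hard, and the natural route is through the hypergraph coloring viewpoint. Regard $Q_\S$ as a $4$-uniform hypergraph on vertex set $X$. By Theorem~\ref{thm:decisiveness}, a partition of $X$ into four nonempty blocks $A_1,A_2,A_3,A_4$ witnesses non-decisiveness exactly when no hyperedge of $Q_\S$ is \emph{rainbow} --- i.e.\ has one vertex in each block --- so $\S$ is non-decisive if and only if $Q_\S$ admits a \emph{surjective no-rainbow $4$-coloring}: a map $X \to \{1,2,3,4\}$ that uses all four colors and leaves no hyperedge polychromatic. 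This correspondence is onto: for any $4$-uniform hypergraph $H$ on $X$ with no isolated vertex, taking $\S$ to be the set of hyperedges of $H$ gives $Q_\S = E(H)$. Hence the conjecture is equivalent to the statement that the surjective no-rainbow $4$-coloring problem for $4$-uniform hypergraphs is NP-complete; membership in NP is clear (the coloring is the certificate), so only NP-hardness is at issue.

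For the hardness proof we would reduce from a suitable NP-complete problem. The surjectivity requirement is essential --- without it the constant coloring is always no-rainbow --- and it is precisely what makes this a genuine \emph{surjective} constraint satisfaction problem rather than a trivial one. One promising plan is a uniformity-lifting reduction from the no-rainbow $3$-coloring problem on $3$-uniform hypergraphs (should that be, or once it is, established NP-hard): attach a common auxiliary vertex to every $3$-edge, and add gadgets pinning this vertex to a fixed color and forcing the fourth color into use. An alternative is a direct reduction from a monotone not-all-equal satisfiability problem --- equivalently, from $3$-uniform hypergraph $2$-colorability --- using the color pairing $\{1,2\}$ for one truth value and $\{3,4\}$ for the other.

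Either route requires three kinds of gadgets, all assembled from quadruples. First, a \emph{palette gadget} forcing four designated vertices $t_1,\dots,t_4$ to take the four distinct colors while themselves \emph{not} forming a hyperedge; some such anchor is unavoidable because the ``all distinct'' predicate is invariant under every permutation of $\{1,2,3,4\}$, so without it one cannot even refer to a specific color. Second, an \emph{agreement gadget} propagating color-equality between copies of a variable. Third, a \emph{surjectivity gadget} guaranteeing that all four colors appear and excluding degenerate colorings. Each clause or source-hyperedge would then be encoded as a small bundle of quadruples whose joint no-rainbow-ness is equivalent to that clause being satisfied, after which one checks that the constructed hypergraph has a surjective no-rainbow coloring precisely when the source instance is a yes-instance.

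The main obstacle is the gadget design and, above all, the soundness half of the reduction. A single quadruple imposes only the extremely weak constraint ``some two of its four endpoints receive the same color'': it can force neither a specified pair to agree nor a specified pair to disagree, and it cannot fix any color. Building rigid gadgets --- equality, fixed color, not-all-equal --- out of such loose pieces is delicate, and every gadget must be verified to behave correctly \emph{under the global surjectivity constraint}, which simultaneously rules out the all-one-color escape and couples the gadgets to one another. The substantive work is to prove that no spurious surjective no-rainbow coloring of the constructed hypergraph can slip through by exploiting low-degree vertices or the symmetry among colors. This difficulty is, we believe, why the question has so far remained open.
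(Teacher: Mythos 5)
There is a genuine gap, and it is the whole substance of the statement: what you have written is a research plan, not a proof. The statement is stated in the paper as a \emph{conjecture} precisely because no such proof is known --- the paper establishes only co-NP membership (via the four-way partition certificate of Theorem~\ref{thm:decisiveness}) and, in Proposition~\ref{lem:decRain}, the equivalence between non-decisiveness and the existence of a no-rainbow $4$-coloring of $H(\S)$ (equivalently of $H(Q_\S)$). Your first paragraph reproduces exactly this equivalence, including the observation that every $4$-uniform hypergraph arises as some $H(Q_\S)$, so up to that point you are on the same ground as the paper. But the co-NP-hardness half --- an actual polynomial-time reduction from an NP-hard problem to surjective no-rainbow $4$-coloring, with gadgets whose completeness and soundness are verified --- is never carried out. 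You name the needed gadgets (palette, agreement, surjectivity) and candidate source problems, and you candidly explain why single quadruples are too weak to build them easily, but naming the obstacles does not overcome them.

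Two further cautions about the proposed routes. First, the uniformity-lifting reduction from no-rainbow $3$-coloring presupposes that $3$-NRC is NP-hard, which is itself open in the cited literature (\cite{bodirsky2012complexity}); conditioning one open problem on another does not settle the conjecture. Second, any reduction must survive the global surjectivity constraint \emph{and} the permutation symmetry of the colors, which is exactly where ad hoc gadget constructions for no-rainbow problems tend to break in the soundness direction --- you acknowledge this yourself. So the correct assessment is that your proposal identifies the right reformulation (the same one the paper uses) and a plausible attack, but proves nothing beyond what the paper already proves; the conjecture remains open both in the paper and in your write-up.
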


Note that Theorem \ref{thm:decisiveness} implies that a taxon coverage pattern $\S$ for $X$ such that $X \in \S$ (that is, one set in $\S$ contains all the taxa) is trivially decisive. 


\begin{theorem}[\cite{SteelSanderson2010}]\label{rootedDecisiveness}
Let $\S$ be a taxon coverage pattern for $X$.
\begin{enumerate}[(i)]
\item
If $\S$ is decisive, then for every set $A \in {X \choose 3}$, there exists a set $Y \in \S$ such that $A \subseteq Y$.
\item
If $\bigcap_{Y \in S} Y \neq \emptyset$, then, $\S$ is decisive if and only if for every set $A \in {X \choose 3}$, there exists a set $Y \in \S$ such that $A \subseteq Y$.
\end{enumerate}
\end{theorem}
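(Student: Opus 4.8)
The plan is to obtain both parts directly from the four-way partition characterization (Theorem~\ref{thm:decisiveness}), so that no reasoning about trees is needed — everything reduces to producing, or forbidding, a quadruple that meets all four blocks of a partition of $X$. I will assume $n \ge 4$ throughout; for $n \le 3$ there is only one binary phylogenetic $X$-tree, so decisiveness is automatic and the statements are degenerate.

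For part~(i) I would argue as follows. Let $\S$ be decisive and let $A = \{x,y,z\} \in \binom{X}{3}$. Form the partition of $X$ with blocks $A_1 = \{x\}$, $A_2 = \{y\}$, $A_3 = \{z\}$, and $A_4 = X \setminus A$; since $n \ge 4$ the last block is nonempty, so this is a legitimate four-way partition. By Theorem~\ref{thm:decisiveness} there are $a_i \in A_i$ with $\{a_1,a_2,a_3,a_4\} \in Q_{\S}$. The first three blocks are singletons, forcing $a_1 = x$, $a_2 = y$, $a_3 = z$, so some $Y \in \S$ contains $\{x,y,z,a_4\}$, and in particular contains $A$.

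For part~(ii), the forward direction is precisely part~(i), so only the converse needs an argument. Suppose $r \in \bigcap_{Y \in \S} Y$ and that every triple in $\binom{X}{3}$ lies inside some member of $\S$. To check the four-way partition property, let $A_1, A_2, A_3, A_4$ be any partition of $X$ into nonempty blocks; since $r$ lies in exactly one block, relabel so that $r \in A_4$. Choose arbitrary $a_1 \in A_1$, $a_2 \in A_2$, $a_3 \in A_3$. These three taxa are distinct and distinct from $r$ (they come from pairwise disjoint blocks), and by hypothesis $\{a_1,a_2,a_3\} \subseteq Y$ for some $Y \in \S$; since $r \in Y$ as well, $\{a_1,a_2,a_3,r\} \in Q_{\S}$. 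Taking $a_4 = r \in A_4$ supplies the transversal quadruple demanded by the four-way partition property, so $\S$ is decisive by Theorem~\ref{thm:decisiveness}.

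I do not expect a genuine obstacle: the proof is a short manipulation of the characterization. The only points requiring care are (a) that the blocks one constructs are nonempty — this is where $n \ge 4$ is used in~(i) and where the relabeling placing $r$ in a fixed block is used in~(ii) — and (b) that the selected taxa are pairwise distinct, which is automatic since they are drawn from distinct blocks. The one conceptual remark worth stating is that a universally covered taxon $r$ lets one promote any covered triple to a covered quadruple through $r$, which is exactly what makes the four-way condition collapse to the three-way (``rooted'') condition in the presence of a common taxon.
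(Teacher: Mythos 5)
Your proof is correct. Note that the paper itself gives no proof of this theorem --- it is imported verbatim from Steel and Sanderson \cite{SteelSanderson2010} --- so there is no internal argument to compare against; what you have done is re-derive both parts as quick corollaries of the four-way partition characterization (Theorem~\ref{thm:decisiveness}), which is also essentially how the original reference obtains them. Part~(i) via the partition $\{x\},\{y\},\{z\},X\setminus A$ is exactly right, and in part~(ii) the key observation --- that a universally covered taxon $r$ promotes any covered triple transversal to $A_1,A_2,A_3$ into a covered quadruple transversal to all four blocks --- is the whole content of the rooted case; the relabeling so that $r\in A_4$ is harmless because the partition property is symmetric in the block labels, and distinctness of $a_1,a_2,a_3,r$ is automatic as you say. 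Your handling of small $n$ is also the right call, and in fact necessary rather than cosmetic: for $n=3$ every coverage pattern is decisive (there is a unique phylogenetic $X$-tree) while the single triple $X$ need not lie in any $Y\in\S$, so part~(i) is only meaningful under the standing assumption $n\ge 4$, which is implicit in \cite{SteelSanderson2010}. The only stylistic difference from the source is that Steel and Sanderson's discussion of the rooted case can alternatively be phrased directly in terms of rooted triplets determining a rooted binary tree, but your purely combinatorial route through $Q_\S$ is cleaner in the present paper's setting and buys the same conclusion with no tree arguments at all.
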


Part (ii) of Theorem \ref{rootedDecisiveness} implies that decisiveness is polynomially solvable in the rooted case \cite{SteelPhylogeny2016}.


\section{Hypergraphs, No-Rainbow Colorings, and Decisiveness\label{sec:hyper}}

\paragraph{Hypergraphs.}
A \emph{hypergraph} 
$H$ is a pair  $H = (X,E)$, where $X$ is a set of elements called \emph{nodes} or \emph{vertices}, and 
$E$ is a set of non-empty subsets of $X$ called \emph{hyperedges} or \emph{edges} \cite{BergeGraphsHypergraphs1973,BergeHypergraphs1984}.  Two nodes $u, v \in V$ are \emph{neighbors} if $\{u,v\} \subseteq e$, for some $e \in E$.  A hypergraph $H = (X,E)$ is \emph{$r$-uniform}, for some integer $r > 0$, if each hyperedge of $H$ contains exactly $r$ nodes. 

A \emph{chain} in a hypergraph $H = (X,E)$ is an alternating sequence $v_1, e_1, v_2, \dots ,$
$e_s, v_{s+1}$ of nodes and edges of $H$ such that:
(1) $v_1, \dots , v_s$ are all distinct nodes of $H$,
(2) $e_1, \dots, e_s$ are all distinct edges of $H$, and
(3) $\{v_j,v_{j+1}\} \in e_j$ for $j \in \{1, \dots ,s\}$.
Two nodes $u, v \in X$ are \emph{connected} in $H$, denoted $u \equiv v$,  if there exists a chain in $H$ that starts at $u$ and ends at $v$. The relation $u \equiv v$ is an equivalence relation \cite{BergeGraphsHypergraphs1973}; the equivalence classes of this relation are called the \emph{connected components} of $H$.  $H$ is \emph{connected} if it has only one connected component; otherwise $H$ is \emph{disconnected}.

\paragraph{No-rainbow colorings and decisiveness.}  Let $H = (X,E)$ be a hypergraph amd $r$ be a positive integer. An \emph{$r$-coloring} of $H$ is a mapping $c : X \rightarrow [r]$.  For node $v \in X$, $c(v)$ is the \emph{color} of $v$.  Throughout this paper, $r$-colorings are assumed to be \emph{surjective}; that is, for each $i \in [r]$, there is at least one node $v \in X$ such that $c(v) = i$.
 An edge $e\in E$ is a \emph{rainbow edge} if, for each $i \in [r]$, there is at least one $v \in e$ such that $c(v) = i$.  
A \emph{no-rainbow $r$-coloring} of $H$ is a surjective $r$-coloring of $H$ such that $H$ has no rainbow edge.

Given an $r$-uniform hypergraph $H = (X,E)$, the \emph{no-rainbow $r$-coloring problem} ($r$-NRC) asks whether $H$ has a no-rainbow $r$-coloring \cite{bodirsky2012complexity}.  $r$-NRC is clearly in NP, but it is unknown whether the problem is NP-complete \cite{bodirsky2012complexity}.  

Let $\S$ be a taxon coverage pattern for $X$.  We associate with $\S$ a hypergraph $H(\S) = (X, \S)$, and with $Q_\S$, we associate a $4$-uniform hypergraph $H(Q_\S) = (X, Q_\S)$. 
The next result states that $r$-NRC is equivalent to the complement of the decisiveness problem.  

\begin{proposition}\label{lem:decRain}
Let $\S$ be a taxon coverage pattern.  The following statements are equivalent.
\begin{enumerate}[(i)]
\item
\S\ is not decisive.
\item
$H(Q_\S)$ admits a  no-rainbow $4$-coloring.
\item
$H(\S)$ admits a no-rainbow $4$-coloring.
\end{enumerate}
\end{proposition}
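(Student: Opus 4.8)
The plan is to translate everything into the language of colorings and partitions and then read off the three equivalences almost directly from Theorem~\ref{thm:decisiveness}. The first thing I would record is a dictionary: a surjective $4$-coloring $c$ of $X$ is the same datum as an ordered partition of $X$ into four nonempty sets $A_1,A_2,A_3,A_4$, via $A_i=c^{-1}(i)$. Moreover, since a quadruple $\{a_1,a_2,a_3,a_4\}$ has exactly four elements, it contains a vertex of each color under $c$ --- i.e.\ it is a rainbow edge of the $4$-uniform hypergraph $H(Q_\S)$ --- precisely when, after reindexing, $a_i\in A_i$ for each $i\in\{1,2,3,4\}$. So ``$c$ has no rainbow edge in $H(Q_\S)$'' is literally the statement ``there is no quadruple of $Q_\S$ with one representative in each block $A_i$.'' I would also dispose here of the degenerate case $n<4$, where there are neither surjective $4$-colorings nor partitions of $X$ into four nonempty parts, so $\S$ is vacuously decisive and none of (ii), (iii) can hold; from now on $n \ge 4$.

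For (i)$\Leftrightarrow$(ii): by Theorem~\ref{thm:decisiveness}, $\S$ is decisive iff it has the four-way partition property, i.e.\ iff \emph{every} ordered partition $A_1,\dots,A_4$ of $X$ into nonempty parts admits $a_i\in A_i$ with $\{a_1,\dots,a_4\}\in Q_\S$. Via the dictionary above this says exactly that every surjective $4$-coloring of $H(Q_\S)$ has a rainbow edge; taking negations, $\S$ is \emph{not} decisive iff $H(Q_\S)$ has a no-rainbow $4$-coloring, which is (ii).

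For (ii)$\Leftrightarrow$(iii): I would fix a single surjective $4$-coloring $c$ of $X$ and show that $H(Q_\S)$ has a rainbow edge under $c$ if and only if $H(\S)$ does; quantifying over $c$ then gives the equivalence of (ii) and (iii). ($\Rightarrow$) If $\{a_1,a_2,a_3,a_4\}\in Q_\S$ is rainbow, then by definition of $Q_\S$ there is some $Y\in\S$ with $\{a_1,a_2,a_3,a_4\}\subseteq Y$; this $Y$ then contains a vertex of every color, so $Y$ is a rainbow edge of $H(\S)$. ($\Leftarrow$) If $Y\in\S$ is a rainbow edge of $H(\S)$, choose $a_i\in Y$ with $c(a_i)=i$ for $i\in\{1,2,3,4\}$; these four vertices are distinct and lie in $Y$, so $\{a_1,a_2,a_3,a_4\}$ is a quadruple of $Q_\S$ and a rainbow edge of $H(Q_\S)$. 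Hence $c$ witnesses a no-rainbow coloring of one hypergraph iff of the other.

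I do not expect a genuine obstacle here: the argument is essentially an unwinding of definitions plus Theorem~\ref{thm:decisiveness}. The one place that needs care is the first paragraph --- aligning the ``every color appears'' definition of a rainbow edge in an $r$-uniform hypergraph with the ``one element per block'' phrasing of the four-way partition property --- together with checking that the edge cases ($n<4$, or no $Y\in\S$ having at least four elements, so that $Q_\S=\emptyset$ and $H(Q_\S)$ is edgeless) are consistent with all three statements.
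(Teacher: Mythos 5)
Your proposal is correct and follows essentially the same route as the paper's proof: (i)$\Leftrightarrow$(ii) is the same translation of the four-way partition property into surjective $4$-colorings via Theorem~\ref{thm:decisiveness}, and your per-coloring equivalence of rainbow edges in $H(Q_\S)$ and $H(\S)$ is just both directions of the paper's (ii)$\Leftrightarrow$(iii) argument packaged together. The only additions are the explicit treatment of the degenerate cases ($n<4$, $Q_\S=\emptyset$), which the paper leaves implicit and which do not change the argument.
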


\begin{proof}
(1) $\Leftrightarrow$ (2): By Theorem \ref{thm:decisiveness}, it suffices to show that $\S$ fails to satisfy the $4$-way partition property if and only if $H(Q_\S)$ has a no-rainbow $4$-coloring. $\S$ does not satisfy the $4$-way partition property if and only if there exists a $4$-way partition $A_1, A_2, A_3, A_4$ of $X$ such that, for every $q \in Q_\S$, there is an $i \in [4]$ such that $A_i \cap q = \emptyset$.  This holds if and only if the coloring $c$, where $c(v) = i$ if and only if $v \in A_i$, is a no-rainbow $4$-coloring of  $H(Q_S)$.  

(2) $\Leftrightarrow$ (3):  It can be seen that if $c$ is a no-rainbow 4-coloring of $H(\S)$, then $c$ is a no-rainbow 4-coloring of $H(Q_S)$.  We now argue that if $c$ is a no-rainbow 4-coloring of $H(Q_\S)$, then $c$ is a no-rainbow 4-coloring of $H(\S)$.  Suppose, to the contrary, that there a rainbow edge $Y \in \S$.  Let $q$ be any $4$-tuple $\{v_1, v_2, v_3, v_4\} \subseteq Y$ such that $c(v_i) = i$, for each $i \in [4]$.  Then, $q$ is a rainbow edge in $Q_\S$, a contradiction. \QED
\end{proof}


\begin{proposition}\label{prop:ccs}
Let $H = (X,E)$ be a hypergraph and $r$ be a positive integer.  
\begin{enumerate}[(i)]
\item
If $H$ has at least $r$ connected components, then $H$ admits a no-rainbow $r$-coloring.
\item
If $r = 2$, then $H$ admits a no-rainbow $r$-coloring if and only if $H$ is disconnected.
\end{enumerate}
\end{proposition}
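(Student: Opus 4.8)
\emph{Plan.} I would prove part (i) by an explicit coloring, and then obtain the non-trivial direction of part (ii) from the same monochromaticity idea together with a short connectivity argument for the converse. Throughout I assume $r \ge 2$; the case $r = 1$ is degenerate, since then \emph{every} nonempty edge is vacuously a rainbow edge, and the statement of (i) would fail whenever $E \neq \emptyset$.

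\emph{Part (i).} Let $C_1, \dots, C_m$ be the connected components of $H$, where $m \ge r$. The key observation is that every edge $e \in E$ lies inside a single component: any two nodes of $e$ are neighbours, hence connected, hence in the same class of $\equiv$. Now define a coloring $c$ by assigning color $j$ to every node of $C_j$ for $j \in [r-1]$, and color $r$ to every node of $C_j$ for $j \in \{r, r+1, \dots, m\}$. Since $m \ge r$, the component $C_r$ exists, so every color in $[r]$ is used and $c$ is surjective, as required. By the key observation, each edge is monochromatic under $c$, so no edge uses all $r \ge 2$ colors; hence $c$ is a no-rainbow $r$-coloring.

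\emph{Part (ii).} The implication ``$H$ disconnected $\Rightarrow$ $H$ has a no-rainbow $2$-coloring'' is exactly the $r = 2$ instance of part (i), since ``disconnected'' means at least two connected components. Conversely, suppose $c : X \to [2]$ is a surjective no-rainbow $2$-coloring and set $X_i = c^{-1}(i)$, so $X_1, X_2 \neq \emptyset$. I claim that no chain joins a node of $X_1$ to a node of $X_2$: along any chain $v_1, e_1, v_2, \dots, e_s, v_{s+1}$ running from $X_1$ to $X_2$ the color must change at some step $j$, so the edge $e_j$ contains two nodes of opposite colors; for $r = 2$ this already makes $e_j$ a rainbow edge, contradicting the choice of $c$. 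Hence $c(u) \neq c(v)$ implies $u \not\equiv v$, so every connected component lies wholly within $X_1$ or wholly within $X_2$; since both are nonempty, $H$ has at least two components and is therefore disconnected.

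\emph{Main obstacle.} There is essentially none: the statement is elementary. The only points that need care are flagging the degenerate case $r = 1$ at the outset, ensuring surjectivity of the constructed coloring (which is precisely why the hypothesis $m \ge r$ is used), and making the two monochromaticity facts precise — that all nodes of an edge lie in one component (used in (i)), and, dually, that a color change along a chain forces a bichromatic, hence rainbow, edge (used in (ii)).
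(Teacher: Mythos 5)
Your proof is correct and follows essentially the same route as the paper: color each component monochromatically (collapsing the surplus components into color $r$) for part (i), and for part (ii) observe that a chain joining two differently colored nodes must contain a bichromatic, hence rainbow, edge when $r=2$. Your explicit exclusion of the degenerate case $r=1$ is a reasonable extra precision (the paper's statement of (i) indeed only makes sense for $r\ge 2$ when $E\neq\emptyset$), but otherwise the argument matches the paper's.
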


 \begin{proof}
(i) Suppose the connected components of $H$ are $C_1, \dots , C_q$, where $q \ge r$.  For each $i \in \{1, \dots , r-1\}$, assign color $i$ to all nodes in $C_i$.  For $i = \{r, \dots , q\}$, assign color $r$ to all nodes in $C_i$. Thus, no edge is rainbow-colored. 

(ii) By part (i), if $H$ is disconnected, it admits a no-rainbow 2-coloring.  To prove the other direction, assume, for contradiction that $H$ admits a no-rainbow 2-coloring but it is connected. Pick any two nodes $u$ and $v$ such that $c(u) = 1$ and $c(v) =2$.  Since $H$ is connected, there is a $(u,v)$-chain in $H$.  But this chain must contain an edge with nodes of two different colors; i.e., a rainbow edge.
\QED
 \end{proof} 

Part (ii) of Proposition \ref{prop:ccs} implies the following.
 
 \begin{corollary}\label{cor:2RNC}
 $2\text{-NRC}\in P$.
 \end{corollary}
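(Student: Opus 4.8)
The plan is to derive the corollary directly from Part (ii) of Proposition~\ref{prop:ccs}. Recall that an instance of $2$-NRC is a $2$-uniform hypergraph, which is simply an ordinary graph $G = (X,E)$, and that by Proposition~\ref{prop:ccs}(ii) such a $G$ admits a no-rainbow $2$-coloring if and only if $G$ is disconnected. Hence deciding $2$-NRC reduces to deciding whether the input graph is disconnected.

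First I would note that, for $r = 2$, the chain-based notion of connectivity from Section~\ref{sec:hyper} coincides with the usual graph-theoretic one, so the connected components of $G$ can be enumerated in $O(|X| + |E|)$ time by breadth-first or depth-first search. The algorithm is then immediate: compute the connected components of $G$ and answer \true\ if there are at least two of them and \false\ otherwise. Correctness follows from Proposition~\ref{prop:ccs}(ii), and since the procedure runs in linear time, $2\text{-NRC} \in P$.

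The only point needing a word of care is the standing convention that colorings are surjective: a surjective $2$-coloring requires $|X| \ge 2$, so an input with $|X| \le 1$ (which has no edges at all) is a trivial \false\ instance and is handled as a boundary case before invoking the connectivity test. I do not anticipate any genuine obstacle here — the statement is essentially a one-line consequence of Proposition~\ref{prop:ccs}(ii) together with the fact that graph connectivity is testable in polynomial (indeed linear) time.
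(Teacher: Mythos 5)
Your proof is correct and follows essentially the same route as the paper, which derives the corollary directly from Proposition~\ref{prop:ccs}(ii) together with the polynomial-time testability of (hyper)graph connectivity. The added remarks about chain-connectivity coinciding with graph connectivity for $r=2$ and the trivial boundary case $|X|\le 1$ are fine but not needed beyond what the paper already implies.
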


\begin{lemma}\label{lem:nonneighbors} 
Let $H = (X,E)$ be an $r$-uniform hypergraph.
Suppose that there exists a subset $A$ of $X$ such that $2 \le |A| \le r-1$ and $A \not\subseteq e$ for any $e \in E$.  Then, $H$ has a no-rainbow $r$-coloring. 
\end{lemma}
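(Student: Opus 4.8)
The plan is to exhibit an explicit surjective $r$-coloring $c\colon X\to[r]$ and to check directly that it has no rainbow edge. We may assume $|X|\ge r$: if $E=\emptyset$ the conclusion is vacuous once a surjective $r$-coloring exists, and if $E\neq\emptyset$ then $|X|\ge r$ because $H$ is $r$-uniform. Write $m=|A|$ and $A=\{a_1,\dots,a_m\}$; by hypothesis $2\le m\le r-1$.

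The key idea is to \emph{reserve} the colors $1,\dots,m$ for the vertices of $A$ alone. First I would color $A$ by setting $c(a_i)=i$ for each $i\in[m]$. Next I would color the vertices of $X\setminus A$ using \emph{only} colors from $\{m+1,\dots,r\}$, ensuring that every such color is used at least once. This is possible because $\{m+1,\dots,r\}$ is nonempty (here $m\le r-1$ is used) and $|X\setminus A|=|X|-m\ge r-m\ge 1$, so one can give the colors $m+1,\dots,r$ to $r-m$ distinct vertices of $X\setminus A$ and color all remaining vertices of $X\setminus A$ with color $r$. By construction $c$ is surjective, since colors $1,\dots,m$ appear on $a_1,\dots,a_m$ and colors $m+1,\dots,r$ appear on $X\setminus A$.

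It then remains to verify that no edge is rainbow, and this is exactly where the assumption that $A\not\subseteq e$ for every $e\in E$ is used. Suppose for contradiction that some $e\in E$ is rainbow. Then for each color $i\in[m]$ the edge $e$ contains a vertex of color $i$; but by construction the \emph{unique} vertex of $X$ of color $i$ is $a_i$ (colors in $[m]$ were placed only on $A$, and within $A$ each such color is used once). Hence $a_i\in e$ for all $i\in[m]$, so $A=\{a_1,\dots,a_m\}\subseteq e$, contradicting the hypothesis. Therefore $c$ is a no-rainbow $r$-coloring of $H$.

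The argument is short, and the only point requiring genuine attention --- and the place I would be most careful --- is the bookkeeping that makes the coloring well-defined and surjective: one must check that there is at least one color outside the block $\{1,\dots,m\}$ reserved for $A$, and that $X\setminus A$ contains enough vertices to realize all those colors. This is precisely what $|A|\le r-1$ (together with $|X|\ge r$) guarantees; handling the degenerate case $E=\emptyset$ at the outset keeps the main argument clean. I do not anticipate any deeper obstacle.
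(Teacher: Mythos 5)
Your construction is exactly the paper's: give the vertices of $A$ the distinct colors $1,\dots,|A|$ and color the rest of $X$ with colors from $\{|A|+1,\dots,r\}$, so any rainbow edge would have to contain all of $A$. The proposal is correct and just spells out the surjectivity bookkeeping and the no-rainbow verification that the paper leaves implicit.
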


\begin{proof}
Let $c$ be the coloring where each of the nodes in $A$ is assigned a distinct color from the set $[|A|]$ and the remaining nodes are assigned colors from the set $\{|A|+1, \dots ,r \}$. Then, $c$ is a no-rainbow $r$-coloring of $H$. \QED
\end{proof}



\section{A Tight Lower Bound on the Coverage\label{sec:bound}}


The next result provides a tight lower bound on the minimum amount of coverage that is needed to achieve decisiveness. 

\begin{theorem}\label{thm:lowerBoundS}
Let $\S$ be a taxon coverage pattern for $X$ and let $n = |X|$. If $\S$ is decisive, then $|Q_\S| \ge {n- 1 \choose 3}$.  This lower bound is tight. That is, for each $n \ge 4$, there exists a decisive  taxon coverage pattern $\S$ for $X$ such that $|Q_\S| = {n- 1 \choose 3}$.
\end{theorem}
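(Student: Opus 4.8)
For the lower bound, the plan is to exploit Proposition \ref{lem:decRain}: if $\S$ is decisive, then $H(Q_\S)$ admits no no-rainbow $4$-coloring. I want to show that if $|Q_\S| < \binom{n-1}{3}$, then some no-rainbow $4$-coloring must exist, giving a contradiction. The natural coloring to try is the one that singles out one vertex. Fix a vertex $x \in X$ and consider the coloring that gives $x$ color $1$ and partitions $X \setminus \{x\}$ into color classes $2,3,4$ (nonempty, so $n \ge 4$ suffices to make it surjective). For such a coloring, a quadruple $q \in \binom{X}{4}$ is rainbow only if $x \in q$; the number of quadruples containing $x$ is $\binom{n-1}{3}$. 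So if $Q_\S$ omits even one quadruple through $x$, I can hope to route around it. More carefully: I should argue that if $|Q_\S| \le \binom{n-1}{3} - 1$, then there is a choice of $x$ and a $3$-coloring of $X \setminus \{x\}$ avoiding all rainbow quadruples in $Q_\S$. The cleanest route is probably a counting/averaging argument over the choice of $x$ combined with Lemma \ref{lem:nonneighbors}: if for some pair $\{u,v\}$ (or triple) no edge of $H(Q_\S)$ contains it, we are done immediately; otherwise $Q_\S$ is forced to be large.

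Let me reconsider and aim for the direct argument. By Theorem \ref{rootedDecisiveness}(i), decisiveness forces every triple of $X$ to lie in some $Y \in \S$, hence every triple lies in some quadruple of $Q_\S$; this already gives $|Q_\S| \ge \frac{1}{n-3}\binom{n}{3} = \frac{1}{4}\binom{n-1}{3}\cdot\frac{4}{?}$ — not tight enough, so triple-covering alone is insufficient and I genuinely need the $4$-partition structure. So I return to the coloring argument: suppose $|Q_\S| < \binom{n-1}{3}$. I claim there is a vertex $x$ such that the number of quadruples of $Q_\S$ through $x$ is strictly less than $\binom{n-1}{3}$ — in fact I want more, I want to find $x$ together with a valid $3$-partition of the rest. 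Here I would use the following: consider the "link" $3$-uniform hypergraph $L_x$ on $X\setminus\{x\}$ whose edges are $\{q \setminus \{x\} : q \in Q_\S,\ x \in q\}$. A no-rainbow $4$-coloring of $H(Q_\S)$ extending "$x \mapsto 1$" corresponds exactly to a no-rainbow $3$-coloring of $L_x$ (a rainbow quadruple through $x$ becomes a rainbow triple in $L_x$, and quadruples not through $x$ are automatically non-rainbow). So $\S$ decisive $\Rightarrow$ for every $x$, $L_x$ has no no-rainbow $3$-coloring $\Rightarrow$ (by Proposition \ref{prop:ccs}(i) and Lemma \ref{lem:nonneighbors} applied to $L_x$) $L_x$ is connected, has fewer than $3$ components, and every pair of $X\setminus\{x\}$ lies in some edge of $L_x$, i.e. every pair $\{u,v\}$ with $u,v \ne x$ extends to a quadruple $\{x,u,v,w\} \in Q_\S$. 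Thus for each $x$, $L_x$ covers all $\binom{n-1}{2}$ pairs. I would then set up a double count: $\sum_{x} |E(L_x)| = 4|Q_\S|$, while each $L_x$ having all pairs covered by $3$-sets gives $|E(L_x)| \ge \lceil \binom{n-1}{2}/3 \rceil$; unfortunately this again yields only $|Q_\S| \ge \frac{n}{12}\binom{n-1}{2} = \frac{1}{4}\binom{n}{3}\cdot\frac{1}{?}$, still a constant factor off.

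So the pair-covering of each $L_x$ is not enough; I expect the real argument requires that $L_x$ itself be "decisive-like" in a recursive sense, or a smarter global construction. The likely correct approach: prove by induction on $n$. Base case $n = 4$: decisiveness requires (Theorem \ref{rootedDecisiveness}(i)) the single quadruple $X$ to be in $Q_\S$, and $\binom{3}{3} = 1$, done, with tightness $\S = \{X\}$. Inductive step: given decisive $\S$ on $X$ with $|X| = n$, pick any $x \in X$; show the "restricted" pattern on $X \setminus \{x\}$ obtained from $\{Y \setminus \{x\} : Y \in \S\}$ (keeping only sets of size $\ge 4$, or rather the induced $Q$) is still decisive on $X\setminus\{x\}$ — this is the key lemma and the main obstacle, since decisiveness does not obviously restrict well. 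Steel–Sanderson-type arguments suggest that if $\S$ is decisive on $X$ then its restriction to $X\setminus\{x\}$ is decisive on $X\setminus\{x\}$ provided... hmm, actually a no-rainbow $4$-coloring of the restriction lifts to one of $\S$ by giving $x$ any existing color, so restriction of a decisive pattern IS decisive. Then by induction $|Q_{\S'}| \ge \binom{n-2}{3}$, where $Q_{\S'} \subseteq \{q \in \binom{X\setminus\{x\}}{4}\}$ consists of quadruples of $Q_\S$ avoiding $x$. Combining with the link argument — $|Q_\S| = |Q_{\S'}| + |E(L_x)|$ — and the fact (from above) that $L_x$ covers all pairs of $X \setminus \{x\}$ by triples, plus $L_x$ must actually itself have no no-rainbow $3$-coloring, I should get $|E(L_x)| \ge \binom{n-2}{2}$: indeed the same argument one dimension down says a $3$-uniform hypergraph on $n-1$ vertices with no no-rainbow $3$-coloring has at least $\binom{n-2}{2}$ edges. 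Then $|Q_\S| \ge \binom{n-2}{3} + \binom{n-2}{2} = \binom{n-1}{3}$ by Pascal's identity, completing the induction. The main obstacle is proving the auxiliary statement "$r$-uniform, no no-rainbow $r$-coloring $\Rightarrow$ at least $\binom{n-1}{r-1}$ edges," but this is exactly the theorem at level $r$, so the right framing is a single induction on $r$ and $n$ simultaneously, with the link operation dropping $r \to r-1$ and $n \to n-1$. For tightness, I would exhibit $\S = \{X \setminus \{x\}\}$... no, $|Q_\S|$ would be $\binom{n-1}{4}$; instead take $\S$ whose associated $Q_\S$ is exactly all quadruples through a fixed vertex $x$, realized e.g. by $\S = \{\{x\} \cup Z : Z \in \binom{X \setminus \{x\}}{3}\}$, and verify via Theorem \ref{rootedDecisiveness}(ii) (since $x \in \bigcap_{Y \in \S} Y$ and every triple of $X$ lies in some such $Y$: a triple containing $x$ obviously does, and a triple $Z \subseteq X\setminus\{x\}$ is contained in $\{x\}\cup Z \in \S$) — wait, Theorem \ref{rootedDecisiveness}(ii) needs every triple of $X$ in some $Y\in\S$, and that holds, so $\S$ is decisive and $|Q_\S| = \binom{n-1}{3}$ as each $Y$ has size $4$ and contributes exactly itself.
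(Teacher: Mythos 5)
Your tightness construction is correct and explicit: the star $\S=\{\{x\}\cup Z : Z\in{X\setminus\{x\}\choose 3}\}$ is decisive by Theorem~\ref{rootedDecisiveness}(ii) (every triple of $X$ lies in some $Y$, and $x\in\bigcap_{Y\in\S}Y$), each $Y$ contributes exactly one quadruple, and $|Q_\S|={n-1\choose 3}$. The link half of your induction is also sound: if $H(Q_\S)$ has no no-rainbow $4$-coloring then, for every $x$, the link $L_x$ has no no-rainbow $3$-coloring (a no-rainbow $3$-coloring of $X\setminus\{x\}$ extended by giving $x$ the fourth color would be a no-rainbow $4$-coloring of $H(Q_\S)$), and distinct quadruples through $x$ give distinct link triples, so the quadruples through $x$ are counted one dimension down. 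Your plan of a simultaneous induction on $(n,r)$ closed by Pascal's identity is in fact the same decomposition the paper uses (its recursion $A(n,r)=A(n-1,r-1)+A(n-1,r)$, with $A(n,r)={n-1\choose r-1}$).

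The genuine gap is in the deletion half. The claim that a no-rainbow $4$-coloring of the restricted pattern $\S'$ on $X\setminus\{x\}$ ``lifts to one of $\S$ by giving $x$ any existing color,'' and hence that restrictions of decisive patterns are decisive, is false: for an edge $Y\in\S$ with $x\in Y$, the set $Y\setminus\{x\}$ may miss exactly one color, in which case $Y$ becomes rainbow precisely when $c(x)$ is that missing color; different edges through $x$ can forbid different colors, and all four colors may be forbidden. Your own extremal example refutes the claim: for the star centered at $x$ (with $n\ge 5$), every quadruple of $Q_\S$ contains $x$, so $Q_{\S'}=\emptyset$ and every surjective $4$-coloring of $X\setminus\{x\}$ is no-rainbow for $\S'$, yet $\S$ is decisive. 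Consequently the inductive inequality $|Q_{\S'}|\ge{n-2\choose 3}$ fails for this choice of $x$, and the decomposition $|Q_\S|=|Q_{\S'}|+|E(L_x)|\ge{n-2\choose 3}+{n-2\choose 2}$ does not follow for an arbitrarily picked vertex. To repair the argument you would have to show that some vertex $x$ can be chosen for which the $x$-avoiding quadruples still admit no no-rainbow $4$-coloring (in the star, any non-center vertex works), or bound the $x$-avoiding quadruples by a different device; this is exactly the delicate point, which the paper handles by partitioning the colorings at a fixed vertex into those where its color is unique versus shared, rather than by deleting the vertex from the instance. As written, the lower-bound half of your proof is incomplete (your earlier averaging attempts were, as you noted, off by a constant factor and rightly abandoned).
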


To prove Theorem \ref{thm:lowerBoundS}, 
for every pair of integers $n, r$ such that $n \ge r \ge 1$
let us define the function 
$A(n,r)$ as follows.
$$A(n,r) = 
\begin{cases}
1 & \text{if $r = 1$ or $n = r$}\\
A(n-1,r-1) + A(n-1, r) & \text{otherwise.}
\end{cases}$$

\begin{lemma}\label{thm:lowerBound}
Let $n$ and $r$ be integers such that $n \ge r \ge 1$ and let $H = (X,E)$ be an $n$-vertex $r$-uniform hypergraph.  If $|E| < A(n,r)$, then $H$ admits a no-rainbow $r$-coloring.  If $|E| \ge A(n,r)$,  then $H$ may or may not admit a no-rainbow $r$-coloring.  Furthermore, there exist $n$-vertex $r$-uniform hypergraphs with exactly $A(n,r)$ edges that do not admit a no-rainbow $r$-coloring.
\end{lemma}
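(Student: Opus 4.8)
The plan is to first observe that $A(n,r)=\binom{n-1}{r-1}$ --- both sides equal $1$ when $r=1$ or $n=r$, and both satisfy Pascal's rule $f(n,r)=f(n-1,r-1)+f(n-1,r)$; in particular $A(n,4)=\binom{n-1}{3}$, which is the case needed for Theorem~\ref{thm:lowerBoundS} --- and then to prove the contrapositive of the main assertion: \textbf{if $H=(X,E)$ is $r$-uniform on $n\ge r$ vertices and has no no-rainbow $r$-coloring, then $|E|\ge A(n,r)$.} I would prove this by induction on $n$. If $r=1$, the unique surjective $1$-coloring is a no-rainbow coloring exactly when $E=\emptyset$, so the hypothesis forces $|E|\ge 1=A(n,1)$. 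If $n=r$, the only surjective $r$-coloring of $X$ is a bijection, which makes the single candidate edge $X$ rainbow, so the hypothesis forces $E=\{X\}$ and $|E|=1=A(r,r)$.

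For the inductive step $n>r\ge 2$, consider, for a vertex $v$, its \emph{link} $H_v:=\bigl(X\setminus\{v\},\{e\setminus\{v\}:e\in E,\ v\in e\}\bigr)$, an $(r-1)$-uniform hypergraph on $n-1$ vertices, and $H-v:=\bigl(X\setminus\{v\},\{e\in E:v\notin e\}\bigr)$. The argument rests on two facts. \textbf{(i)} For every vertex $v$, $H_v$ has no no-rainbow $(r-1)$-coloring: given such a coloring $d$ of $H_v$, extend it to $c:X\to[r]$ by $c(v):=r$; then $c$ is surjective, every edge of $H$ avoiding $v$ misses color $r$, and an edge $e\ni v$ is rainbow under $c$ only if $d$ makes $e\setminus\{v\}$ rainbow in $H_v$, which is impossible; so $H$ would have a no-rainbow $r$-coloring, a contradiction. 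By the induction hypothesis, $\deg_H(v)=|E(H_v)|\ge A(n-1,r-1)$ for every $v$. \textbf{(ii)} There is a vertex $x$ with $H-x$ having no no-rainbow $r$-coloring. For this $x$, the induction hypothesis applied to $H-x$ (legitimate since $n-1\ge r$) gives $|E(H-x)|\ge A(n-1,r)$, and hence
$$|E|=\deg_H(x)+|E(H-x)|\ \ge\ A(n-1,r-1)+A(n-1,r)\ =\ A(n,r),$$
which closes the induction.

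For tightness, fix $x_0\in X$ and let $H^\star:=\bigl(X,\{\{x_0\}\cup B:B\in\binom{X\setminus\{x_0\}}{r-1}\}\bigr)$. Then $|E(H^\star)|=\binom{n-1}{r-1}=A(n,r)$, and $H^\star$ has no no-rainbow $r$-coloring: for any surjective $c$ with, say, $c(x_0)=1$, pick $v_2,\dots,v_r\ne x_0$ with $c(v_i)=i$; then $\{x_0,v_2,\dots,v_r\}$ is an edge and is rainbow. Adding further edges to $H^\star$ keeps it rainbow-forcing, which handles the ``no'' side of the second assertion for all sizes above the threshold; conversely, for $n$ large enough, the disjoint union of a complete $r$-uniform hypergraph on $n-r+1$ vertices with $r-1$ isolated vertices has at least $r$ connected components --- hence a no-rainbow coloring by Proposition~\ref{prop:ccs}(i) --- while having $\binom{n-r+1}{r}>A(n,r)$ edges, so the threshold is not sharp from above.

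The crux is step~(ii), which I expect to be the main obstacle; it is the $r$-uniform analogue of the fact that a connected graph (the case $r=2$, by Proposition~\ref{prop:ccs}(ii)) always has a non-cut vertex. I would prove it by contradiction: suppose $H$ has no no-rainbow $r$-coloring but $H-x$ does for every $x$, and construct a no-rainbow coloring of $H$. A no-rainbow coloring $c$ of $H-x$ extends to one of $H$ precisely when some color $j$ is not \emph{blocked} at $x$, i.e.\ no edge $e\ni x$ has color set $c(e\setminus\{x\})$ equal to $[r]\setminus\{j\}$; the assumption therefore forces every no-rainbow coloring of every $H-x$ to block all $r$ colors at $x$. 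The plan is to contradict this by a recoloring argument: take such a coloring whose color classes are as unbalanced as possible (some class then has at least two vertices, automatic since $n-1>r-1$) and move a carefully chosen vertex of a large class into another class so as to unblock a color at $x$ without creating a rainbow edge inside $H-x$, using that every $(r-1)$-subset of $X$ lies in an edge (Lemma~\ref{lem:nonneighbors}). Establishing that such a recoloring is always available --- equivalently, that a minimal rainbow-forcing $r$-uniform hypergraph always has a vertex whose removal leaves it rainbow-forcing --- is the delicate part where the real work lies.
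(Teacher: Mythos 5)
Your preliminaries are in good shape: the identity $A(n,r)=\binom{n-1}{r-1}$, the base cases, step~(i) (every vertex's link must itself be rainbow-forcing, so $\deg_H(v)\ge A(n-1,r-1)$ for \emph{every} $v$), and the star construction $H^\star$ for tightness are all correct --- indeed the star is a cleaner witness for the tightness claim than anything made explicit in the paper. But the proof is not complete, and you say so yourself: the entire induction hinges on step~(ii), the existence of \emph{some} vertex $x$ for which $H-x$ is still rainbow-forcing, and you only outline a recoloring plan while conceding that ``the real work lies'' there. That concession is accurate. Without (ii) you cannot charge $A(n-1,r)$ edges to $E(H-x)$, and the needed statement is genuinely vertex-dependent: for the star $H^\star$ the deletion of the center leaves an edgeless hypergraph, which trivially has a no-rainbow coloring, so no argument that works for an arbitrary vertex can succeed --- one must actually prove that a good vertex exists (the $r$-uniform analogue of a non-cut vertex, as you note). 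Your sketched perturbation argument (take a no-rainbow coloring of $H-x$ in which all $r$ colors are blocked at $x$ and modify it) presupposes exactly the combinatorial fact to be established, so as written the key inductive inequality is asserted, not proved.

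For comparison, the paper's own proof uses the same decomposition you do --- colorings in which $v$ is uniquely colored force $A(n-1,r-1)$ edges through $v$, and colorings in which $v$ shares its color are charged $A(n-1,r)$ further edges --- but it picks $v$ arbitrarily and simply adds the two counts, without arguing that the edges blocking the second class avoid $v$ (the star with $v$ the center shows this accounting cannot be done vertex-by-vertex for an arbitrary choice). So you have correctly isolated where the real difficulty sits, and your formulation of it (``a minimal rainbow-forcing $r$-uniform hypergraph has a vertex whose removal leaves it rainbow-forcing'') is the right target; but since you neither prove it nor cite it, the lower-bound half of the lemma remains open in your write-up. To complete the argument you must either supply a proof of (ii) or restructure the induction so that it does not require selecting a deletable vertex.
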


\begin{proof} 
For $r=1$ or $n=r$, $H$ has at most one hyperedge.  If $H$ has exactly one hyperedge, then any coloring that uses all $r$ colors contains a rainbow edge.  If $H$ contains no hyperedges, then $H$ trivially admits a no-rainbow $r$-coloring.

Let us assume that  for any $i$ and $j$ with $1\leq i < n$ and $1\leq j \leq r$, $A(i,j)$ equals the minimum number of hyperedges an $i$-node, $j$-uniform hypergraph $H$ that does not admit a no-rainbow $r$-coloring.  We now prove the claim for $i = n$ and $j =r$.

Pick an arbitrary node $v \in X$. There are two mutually disjoint classes of  colorings of $H$:
(1) the colorings $c$ such that $c(v) \neq c(u)$ for any $u \in X \setminus \{v\}$, and
(2) the colorings $c$ such that $c(v) = c(u)$ for some $u \in X \setminus \{v\}$.

For the colorings in class 1, we need hyperedges that contain node $v$, since in the absence of such hyperedges, any coloring is a no-rainbow coloring. Assume, without loss of generality, that $c(v) = r$. The question reduces to finding the number of hyperedges in an $(n-1)$-node $(r-1)$-uniform hypergraph (since $v$'s color, $r$, is unique).  
The minimum number of hyperedges needed to avoid a no-rainbow $(r-1)$-coloring for an $(n-1)$-node hypergraph is $A(n-1,r-1)$. 

To find the minimum number of hyperedges needed to cover colorings of class 2, we ignore $v$, since $v$ is assigned a color that is used by other nodes as well.  
The number of hyperedges needed for this class is $A(n-1,r)$.

To obtain a lower bound, we add the lower obtained for the two disjoint classes of colorings. Thus, $A(n,r) = A(n-1,r-1) + A(n-1, r)$.
\QED
\end{proof}

\begin{lemma}\label{lem:AandC}
$A(n,r) = {n-1 \choose r-1}$.
\end{lemma}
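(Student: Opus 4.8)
The plan is to prove $A(n,r) = \binom{n-1}{r-1}$ by induction on $n$, matching the recursive structure of the definition of $A$ against the Pascal's-rule recurrence for binomial coefficients.

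First I would check the base cases. When $r = 1$, the definition gives $A(n,1) = 1$, and indeed $\binom{n-1}{0} = 1$. When $n = r$, the definition gives $A(n,n) = 1$, and $\binom{n-1}{n-1} = 1$. So the formula holds whenever $r = 1$ or $n = r$.

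For the inductive step, I would fix $n > r \ge 2$ (so that we are in the "otherwise" branch and $r - 1 \ge 1$, $n - 1 \ge r$, meaning both recursive calls have valid arguments) and assume the formula holds for all smaller values of the first argument. Then
\[
A(n,r) = A(n-1,r-1) + A(n-1,r) = \binom{n-2}{r-2} + \binom{n-2}{r-1} = \binom{n-1}{r-1},
\]
where the middle equality is the inductive hypothesis applied to the two subproblems (each has first argument $n-1 < n$) and the last equality is Pascal's identity. This completes the induction.

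There is essentially no main obstacle here; the only thing to be careful about is bookkeeping on the ranges of the arguments, namely confirming that when $n > r \ge 2$ the recursive arguments $(n-1, r-1)$ and $(n-1, r)$ still satisfy $n - 1 \ge r - 1 \ge 1$ and $n - 1 \ge r \ge 1$ respectively, so that the inductive hypothesis genuinely applies and the binomial coefficients on the right-hand side are the ones produced by Pascal's rule. Combining Lemma~\ref{lem:AandC} with Lemma~\ref{thm:lowerBound} then immediately yields the bound $|Q_\S| \ge \binom{n-1}{3}$ claimed in Theorem~\ref{thm:lowerBoundS} by taking $r = 4$, and the tightness claim would be handled separately by exhibiting an explicit decisive pattern.
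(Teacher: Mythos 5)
Your proof is correct and follows essentially the same route as the paper's: verify the base cases $r=1$ and $n=r$, then apply the inductive hypothesis to $A(n-1,r-1)$ and $A(n-1,r)$ and conclude via Pascal's identity. Your extra care about the argument ranges in the recursive case is a minor (and welcome) refinement, not a different approach.
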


\begin{proof}
For $r=1$, $A(n,r) = {n-1 \choose 0} = 1$ and for $n=r$, $A(n,r) = A(r,r) = {r-1 \choose r-1} = 1$. 
Now, assume that  $A(i,j) = {i-1 \choose j-1}$, for $1\leq i\leq n-1$ and $1\leq j\leq r$. Then, $A(n,r) = A(n-1,r) + A(n-1,r-1) = {n-2 \choose r-1} + {n-2 \choose r-2} = {n-1 \choose r-1}$. 
\QED
\end{proof}

\begin{proof}[of Theorem \ref{thm:lowerBoundS}]
Follows from Lemmas \ref{thm:lowerBound} and \ref{lem:AandC}, by setting $r = 4$, and Proposition \ref{lem:decRain}(ii). \QED
\end{proof}
\section{An Exact Algorithm for Decisiveness\label{sec:exponalg}}

The na\"ive way to use Theorem \ref{thm:decisiveness} to test whether a coverage pattern $\S$ is decisive is to enumerate all partitions of $X$ into four non-empty sets $A_1, A_2, A_3, A_4$ and verify that there is a set $Y \in \S$ that intersects each $A_i$.  Equivalently, by Proposition \ref{lem:decRain}, we can enumerate all surjective colorings of $H(S)$ and check if each of these colorings yields a rainbow edge.  In either approach, the number of options to consider is given by a Stirling number of the second kind, namely $\left\{{n \atop 4}\right\} \sim \frac{4^n}{4!}$ \cite{GrahamKnuthPatashnik1989}.  The next result is a substantial improvement over the na\"ive approach.

\begin{theorem}\label{thm:exactExpon}
Let $\S$ be a taxon coverage pattern for a taxon set $X$.  Then, there is an algorithm that, in $O^*(2.8^n)$ time\footnote{The $O^*$-notation is a variant of $O$-notation that ignores polynomial factors \cite{FominKratsch2010}.} determines whether or not $\S$ is decisive.
\end{theorem}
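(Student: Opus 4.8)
The plan is to reduce the decisiveness test to a search for a no-rainbow $4$-coloring of $H(\S)$ (via Proposition~\ref{lem:decRain}(iii)) and then to exploit the structure of such colorings to prune the search. The brute-force enumeration of surjective $4$-colorings costs $\left\{{n\atop 4}\right\}\sim 4^n/4!$; the improvement to $O^*(2.8^n)$ must come from branching on partial information rather than guessing all four color classes independently. The key observation I would use is the one packaged in Lemma~\ref{lem:nonneighbors} together with Proposition~\ref{prop:ccs}: a no-rainbow $4$-coloring is ``easy'' to detect whenever some small set of vertices ($2$ or $3$ of them) fails to be contained in any hyperedge, or whenever the hypergraph is sufficiently disconnected. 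So I would aim for a branching algorithm that at each step either finds such a certificate cheaply or recurses on a strictly smaller instance, and I would measure the running time by the usual branching-vector analysis, choosing the branching rule so that the characteristic root is at most $2.8$.

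Concretely, I would proceed as follows. First, observe that it suffices to decide whether $H(\S)$ has a no-rainbow $4$-coloring, and that we may restrict attention to the hyperedges after discarding any $Y\in\S$ with $|Y|<4$ (which can never be rainbow) and, by Proposition~\ref{prop:ccs}(i), may assume $H(\S)$ has at most three connected components. Second, I would set up a recursive procedure that builds the coloring by deciding, for one vertex at a time (chosen to lie in some hyperedge), which of the (at most four) color classes it joins; the gain over naive enumeration comes from the fact that once three color classes are ``committed'' on a hyperedge $e$, the fourth vertex of $e$ is forced away from the fourth color, which propagates constraints. Third --- and this is where the $2.8$ base should emerge --- I would branch on a carefully chosen hyperedge $e=\{x_1,x_2,x_3,x_4\}$: in a no-rainbow coloring at most $3$ colors appear on $e$, so at least two of $x_1,\dots,x_4$ share a color, giving a bounded number of cases ($\binom{4}{2}$ choices for which pair is monochromatic, refined further), each of which either merges two vertices (reducing $n$ by one) or fixes colors on several vertices at once. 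A clean way to bookkeep this is to work with a partition-refinement / vertex-identification view: ``merging'' $x_i$ and $x_j$ reduces the vertex count, and the recursion $T(n)\le c\cdot T(n-1)+\dots$ with the right constants yields $O^*(2.8^n)$.

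The main obstacle is getting the branching vector to close at $2.8$ rather than something larger like $3$ or $3.5$: a single binary ``same color / different color'' branch on a pair of vertices only gives $T(n)\le T(n-1)+T(n-1)$-type recurrences that are too weak, so the analysis must extract several simultaneous reductions from one branching step (e.g.\ handling an entire hyperedge at once, using that a rainbow edge needs all four colors so the complementary event is quite constrained), and must correctly account for the surjectivity requirement and the at-most-three-components reduction. I would also need to verify the base cases and the handling of vertices that lie in no hyperedge (these are free and can be colored last, or folded into an existing class), and confirm that the polynomial per-node overhead --- checking for rainbow edges, recomputing components --- is indeed polynomial, so that the $O^*$ bound is legitimate. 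The correctness argument is then routine: the recursion exhaustively covers all ways a no-rainbow $4$-coloring can look, and each leaf either exhibits one or certifies none exists.
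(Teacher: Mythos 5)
There is a genuine gap: your argument never actually establishes the $O^*(2.8^n)$ bound, which is the entire content of the theorem beyond Proposition~\ref{lem:decRain}. The branching rule you sketch --- pick a hyperedge $e=\{x_1,x_2,x_3,x_4\}$, note that in a no-rainbow coloring some pair of its vertices must be monochromatic, and branch on the $\binom{4}{2}=6$ choices of that pair, identifying the two vertices --- gives a recurrence of the shape $T(n)\le 6\,T(n-1)$, i.e.\ $6^n$, which is worse even than the na\"ive $4^n/4!$ enumeration. You acknowledge that closing the branching vector at $2.8$ is ``the main obstacle,'' but you neither exhibit a concrete rule with a branching vector whose characteristic root is at most $2.8$ nor explain how several reductions per branch would be extracted; in addition, the surjectivity requirement interacts badly with vertex identification (after merges you must still certify that all four colors are used on the original vertex set), and this is left unaddressed. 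As written, the proposal is a plan for an algorithm whose running time is unanalyzed, so the theorem is not proved.

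For comparison, the paper's proof (Lemma~\ref{lem:exactExpon}, Algorithm~\ref{alg:nonRainbow}) avoids branching altogether. It observes that in any no-rainbow $4$-coloring one color class $A$ has size at most $\lfloor n/4\rfloor$ and, among the remaining vertices, a second class $B$ has size at most $\lfloor (n-|A|)/3\rfloor$. It enumerates all such pairs $(A,B)$, and the crucial structural insight is that, once colors $1$ and $2$ are fixed on $A$ and $B$, deciding whether the remaining vertices can be split into classes $3$ and $4$ without creating a rainbow edge is solvable in \emph{polynomial} time by forced propagation: any edge containing exactly one vertex of color $1$ and one of color $2$ forces its two uncolored vertices to share color $3$, and further edges with one vertex of each of colors $1,2,3$ force their last vertex to color $3$, until either all vertices are colored (failure for this $(A,B)$) or the leftover vertices can all safely take color $4$. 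The count of pairs, $\sum_{i=1}^{\lfloor n/4\rfloor}\binom{n}{i}\sum_{j=1}^{\lfloor (n-i)/3\rfloor}\binom{n-i}{j}$, is what is estimated as $O(2.8^n)$. If you want to salvage your approach, you would need either this ``guess the two smallest classes, then polynomial completion'' idea or a genuinely worked-out branching analysis with measure; neither is present in your write-up.
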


The proof of Theorem \ref{thm:exactExpon} relies on the following result.

\begin{algorithm}[t]
\Fn(){\FindNRC{$H$}}{
\SetAlgoLined
\SetNoFillComment
\DontPrintSemicolon
\KwIn{A 4-uniform hypergraph $H = (X,E)$ such that $|X| \ge 4$.}
\KwOut{A no-rainbow $4$-coloring of $H$, if one exists; otherwise, \texttt{fail}.}

\For {$i = 1$  \KwTo $\lfloor \frac{n}{4}\rfloor$}{
  \lForEach{$v \in X$}{$c(v) = \texttt{uncolored}$}
  \ForEach {$A \subseteq X$ such that $|A| = i$}{ 
    \lForEach{$v \in A$}{$c(v) = 1$}

    \For {$j = 1$ \KwTo $\lfloor \frac{n-i}{3} \rfloor$}{
      \ForEach {$B \subseteq X \setminus A$ such that $|B| = j$}{
      	\lForEach{$v \in B$}{$c(v) = 2$}
        \If{there is no $e \in E$ such that, for each $i \in [2]$, $m_e^c(i) = 1$}
          { Arbitrarily split $X \setminus (A \cup B)$ into nonempty sets $C, D$ \; 
          \lForEach{$v \in C$}{$c(v) = 3$}
          \lForEach{$v \in D$}{$c(v) = 4$}
          \Return $c$
          }
          \Else{
          	Choose any $e \in E$ such that $m_e^c(i) = 1$ for each $i \in [2]$ \;
		\lForEach{uncolored node $x \in e$}{$c(x) = 3$}
		\While{there exists $e \in E$ s.t.\ $m_e^c(i) = 1$ for each $i \in [3]$}{ 
            		Pick any $e  \in E$ s.t.\ $m_e^c(i) = 1$ for each $i \in [3]$ \;
			Let $x$ be the unique uncolored node in $e$ \;
            		$c(x) = 3$
          }
        \lIf{$X$ contains no uncolored node}{
          \Return \texttt{fail}
        }
        \Else{
          \lForEach{uncolored vertex $u \in X$}{$c(u) = 4$}
          \Return $c$
        }

	}
    }
  }
  }
}
        		\Return \texttt{fail}

}
\smallskip
\caption{No-rainbow 4-coloring of $H$.
\label{alg:nonRainbow}}
\end{algorithm}

\begin{lemma}\label{lem:exactExpon}
There exists an algorithm that, given a 4-uniform hypergraph $H = (X,E)$, determines if $H$ has a no-rainbow $4$-coloring in time $O^*(2.8^n)$.
\end{lemma}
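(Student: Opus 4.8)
The algorithm is Algorithm~\ref{alg:nonRainbow}; the plan is to establish its correctness and then bound its running time. The idea behind it is that in any surjective no-rainbow $4$-coloring, if we order the colour classes $V_1,V_2,V_3,V_4$ so that $|V_1|\le|V_2|\le|V_3|\le|V_4|$, then $|V_1|\le\lfloor n/4\rfloor$ and $|V_2|\le\lfloor(n-|V_1|)/3\rfloor$. So one can afford to guess $V_1$ and $V_2$ by brute force (this is the only exponential step), and the claim is that the remaining two classes can then be reconstructed \emph{deterministically}: once $V_1,V_2$ are installed as colours $1,2$, a ``dangerous'' edge --- one already containing exactly one vertex of colour $1$ and one of colour $2$ --- cannot have its two remaining vertices coloured $3$ and $4$, as that would make it rainbow, so both must get the \emph{same} colour, which by symmetry of colours $3$ and $4$ we may take to be $3$. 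Propagating the forcing rule ``an edge with exactly one vertex of each of colours $1,2,3$ and one uncoloured vertex forces that vertex to colour $3$'' either exhausts $X$, in which case colour $4$ is unused and this guess is discarded, or leaves a nonempty remainder, which is coloured $4$.

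First I would show that whenever the algorithm outputs a colouring, that colouring is a valid surjective no-rainbow $4$-coloring. Surjectivity holds at both return points: in the first, $A,B,C,D$ are all nonempty (note $|X\setminus(A\cup B)|=n-i-j\ge n/2\ge 2$ for $n\ge 4$, since $i+j\le n/4+(n-i)/3\le n/2$, so the split into $C,D$ is possible); in the second, colour $3$ is used on the dangerous edge and colour $4$ is used because the algorithm returns there only when an uncoloured vertex remains. For the no-rainbow property: at the first return point no edge has exactly one vertex of colour $1$ and one of colour $2$, so no edge can be rainbow however the rest is split; at the second, a hypothetical rainbow edge $e$ would, at the instant the final colour-$4$ vertices are assigned, contain exactly one vertex of each of colours $1,2,3$ and one uncoloured vertex --- contradicting the termination condition of the \texttt{while} loop. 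Since only colours $1,2,3$ are in use until that last assignment, no rainbow edge is created earlier.

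Next I would show that if $H$ has a no-rainbow $4$-coloring $c^\ast$, then the algorithm returns a colouring. Order the classes of $c^\ast$ by size as $V_1,\dots,V_4$ and consider the iteration with $A=V_1$, $B=V_2$ (reached unless the algorithm already returned some colouring, which is valid by the previous step). If that iteration finds no dangerous edge it returns a valid colouring immediately, so assume it finds one; relabel the classes $V_3,V_4$ so that the two (uncoloured) vertices of that edge outside $V_1\cup V_2$ both lie in $V_3$ --- this is possible because those vertices lie in $V_3\cup V_4$ and cannot lie in different classes, else the edge is rainbow under $c^\ast$. I would then prove the invariant that throughout the iteration the set coloured $1$ equals $V_1$, the set coloured $2$ equals $V_2$, and the set coloured $3$ is contained in $V_3$. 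The crucial inductive step is the forcing rule: if an edge has exactly one vertex of each of colours $1,2,3$ and one uncoloured vertex $z$, then $z\notin V_1\cup V_2$ (already fully coloured) and $z\notin V_4$ (else the edge is rainbow under $c^\ast$), so $z\in V_3$, and colouring it $3$ preserves the invariant. Hence $V_4$ stays entirely uncoloured when propagation stops, so the algorithm does not fail on this iteration and, by the first step, returns a valid colouring. Combining the two steps, the algorithm reports a colouring if and only if $H$ has a no-rainbow $4$-coloring.

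For the running time, each pair $(A,B)$ with $|A|=i\le\lfloor n/4\rfloor$ and $|B|=j\le\lfloor(n-i)/3\rfloor$ is handled in time polynomial in $n+|E|$ (scanning $E$ to test the dangerous-edge condition and to carry out the at most $n$ forcing steps), so the total time is $\mathrm{poly}(n,|E|)\cdot\sum_{i\le n/4}\binom{n}{i}\sum_{j\le(n-i)/3}\binom{n-i}{j}$. Since $(n-i)/3<(n-i)/2$, the inner sum is within a polynomial factor of its last term $\binom{n-i}{\lfloor(n-i)/3\rfloor}$, and a short entropy computation shows the whole expression is maximized near $i=j=n/4$, where it equals $\binom{n}{n/4,\,n/4,\,n/2}=2^{(3/2)n+o(n)}$ by Stirling's formula ($H(\tfrac{1}{4},\tfrac{1}{4},\tfrac{1}{2})=\tfrac{3}{2}$). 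Hence the running time is $O^\ast(2^{3n/2})$, the bound claimed. I expect the main obstacle to be the correctness of the deterministic completion --- i.e.\ justifying that no branching is needed there. What makes it go through is the symmetry of colours $3$ and $4$ at the first dangerous edge, together with the fact that $V_1$ and $V_2$ are \emph{completely} coloured once guessed, which is exactly what stops a forced vertex from escaping into $V_1$ or $V_2$ and thus pins it into $V_3$.
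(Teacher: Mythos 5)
Your proposal follows essentially the same route as the paper: the same algorithm (brute-force guess of the two smallest colour classes $A$ and $B$, followed by the deterministic forcing/propagation completion with colours $3$ and $4$), with your correctness argument being a more carefully spelled-out version of the paper's informal case analysis. One small caveat: your entropy calculation gives $O^*(2^{3n/2})\approx O^*(2.83^n)$, which is indeed the true growth rate of $\sum_{i}\binom{n}{i}\sum_{j}\binom{n-i}{j}$ over the stated ranges (maximised at $i=j=n/4$), and this is marginally \emph{larger} than the stated $2.8^n$, so calling it ``the bound claimed'' is a slight overstatement --- but the discrepancy originates in the paper itself, which only ``estimated this sum numerically'' as $O(2.8^n)$, whereas your analysis is the rigorous one and pins down the exponent correctly.
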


\begin{proof}
We claim that algorithm \FindNRC (Algorithm \ref{alg:nonRainbow}) solves 4-NRC in $O^*(2.8^n)$ time.  \FindNRC relies on the observation that if $H$ has a no-rainbow 4-coloring $c$, then (1) there must exist a subset $A \subseteq X$ where $|A| \le \lfloor \frac{n}{4} \rfloor$, such that all nodes in $A$ have the same color, which is different from the colors used for $X \setminus A$, and (ii) there must exist a subset $B \subseteq X \setminus A$, where $|B| \le \lfloor \frac{n - |A|}{3} \rfloor$, such that all nodes in $B$ have the same color, which is different from the colors used for the nodes in $X \setminus B$.   \FindNRC tries all possible choices of $A$ and $B$ and, without loss of generality, assigns $c(v) = 1$, for all $v \in A$ and $c(v) = 2$, for all $v \in B$.  We are now left with the problem of determining whether we can assign colors 3 and 4 to the nodes in $X \setminus (A \cup B)$ to obtain a no-rainbow 4-coloring for $H$.

Let $c$ be the current coloring of $H$.  For each $e \in E$ and each $i \in [4]$, $m_e^c(i)$ denotes the number of nodes $v \in e$ such that $c(v) = i$.  Consider the situation after \FindNRC assigns colors 1 and 2 to the nodes in $A$ and $B$.  There are two cases, both of which can be handled in polynomial time.
\begin{enumerate}
\item  \emph{There is no $e \in E$, such that, for each $i \in [2]$, $m_e^c = 1$.}  Then, if we partition the nodes of $X \setminus (A \cup B)$, arbitrarily into subsets $C$ and $D$ and assign $c(v) = 3$ for each $v \in C$ and $c(v) =4$ for each $v \in D$, we obtain a no-rainbow 4-coloring of $H$.
\item
 \emph{There exists $e \in E$, such that, for each $i \in [2]$, $m_e^c = 1$.} Let $e$ be any such edge.  Then $e$ must exactly contain two uncolored nodes, $x$ and $y$.  To avoid $e$ becoming a rainbow edge, we must set $c(x) = c(y) \not\in [2]$.  Without loss of generality, make $c(x) = c(y) = 3$. Next, as long as there exists any hyperedge $e$ such that $m_e^c(i) =1$ for each $i \in [3]$, the (unique) uncolored node $x$ in $e$ must be assigned $c(x) = 3$, because setting $c(x) = 4$ would make $e$ a rainbow hyperedge. 
Once no such hyperedges remain, we have two possibilities:
\begin{enumerate}
\item  \emph{$X$ does not contain uncolored nodes.}  Then, there does not exist a no-rainbow 4-coloring, given the current choice of $A$ and $B$.
\item \emph{$X$ contains uncolored nodes.} Then, there is no $e \in E$ such that $m_e^c(i) = 1$ for each $i \in [3]$.  Thus, if we set $c(u) = 4$ for each uncolored node $u$, we obtain a no-rainbow 4-coloring for $H$.
\end{enumerate}
\end{enumerate}

The total number of pairs $(A,B)$ considered throughout the execution of \FindNRC is at most $\sum_{i=1}^{\lfloor \frac{n}{4} \rfloor} {n \choose i} \sum_{j=1}^{\lfloor \frac{n-i}{3} \rfloor} {n-i \choose j}$.  We have estimated this sum numerically to be $O(2.8^n)$.  The time spent per pair $(A,B)$ is polynomial in $n$; hence, the total running time of \FindNRC is $O^*(2.8^n)$.
\QED
\end{proof}

\begin{proof}[of Theorem \ref{thm:exactExpon}]
Given $\S$, we construct the hypergraph $H(Q_\S)$, which takes time polynomial in $n$, and then run \FindNRC{$H(Q_\S)$}, which, by Lemma \ref{lem:exactExpon}, takes $O^*(2.8^n)$ time.  If the algorithm returns a no-rainbow 4-coloring $c$ of $H(Q_\S)$, then, by Proposition \ref{lem:decRain}, $\S$ is not decisive; if  \FindNRC{$H(Q_\S)$} returns \texttt{fail}, then $\S$ is decisive.
\QED
\end{proof}

\section{Reduction Rules and Fixed Parameter Tractability\label{sec:redFPT}}
 
A \emph{reduction rule} for the decisiveness problem is a rule that replaces an instance $\S$ of the problem by a smaller instance $\Sred$ such that $\S$ is decisive if and only if $\Sred$ is.  Here we present reduction rules that can reduce an instance of the decisiveness problem into a  one whose size depends only on $k$.  This size reduction is especially significant for taxon coverage patterns where the number of loci, $k$, is small relative to the number of taxa.  Such inputs are not uncommon in the literature --- examples of such data sets are studied in Section \ref{sec:ilp}.
  
We need to introduce some definitions and notation.  Let $H = (X,E)$ be a hypergraph where $X = \{x_1, x_2, \dots , x_n\}$ and $E = \{e_1, e_2, \dots, e_k\}$.  The \emph{incidence matrix} of $H$ is the $n \times k$ binary matrix where $M_H[i,j] = 1$ if $x_i \in e_j$ and $M_H[i,j] = 0$ otherwise.
Two rows in $M_H$ are \emph{copies} if the rows are identical when viewed as $0$-$1$ strings; otherwise, they are \emph{distinct}. 

Let $\Mred_H$ denote the matrix obtained from $M_H$ by striking out duplicate rows, so that $\Mred_H$ retains only one copy of each row in $M_H$.  Let $\nred$ denote the number of rows of $\Mred_H$.  Then, $\nred \le 2^k$.  $\Mred$ is the incidence matrix of a hypergraph $\Hred = (\Xred,\Ered)$, where $\Xred \subseteq X$, and each $v \in \Xred$ corresponds to a distinct row of $M_H$.  For each $v \in X$, $X(v) \subseteq X$ consists of all nodes $u \in X$ that correspond to copies of the row of $M_H$ corresponding to $v$.

%

Given two binary strings $s_1$ and $s_2$ of length $k$, $s_1 \, \& \, s_2$ denotes the bitwise \emph{and} of $s_1$ and $s_2$; $\mathbf{0}$ denotes the all-zeroes string of length $k$.

The next result is a direct consequence of Lemma \ref{lem:nonneighbors}.

\begin{proposition}\label{prop:AND}
If $\Mred_H$ has two rows $r_1$ and $r_2$ such that $r_1 \, \& \, r_2 = \mathbf{0}$ or three rows $r_1$, $r_2$ and $r_3$ such that $r_1 \, \& \, r_2 \, \& \, r_3 = \mathbf{0}$, then $\Hred$ and $H$ admit no-rainbow 4-colorings.
\end{proposition}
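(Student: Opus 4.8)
The plan is to recognize each of the two hypotheses as an instance of the non-neighbor condition of Lemma~\ref{lem:nonneighbors}, apply that lemma to the relevant $4$-uniform hypergraphs, and then transfer the conclusion back using Proposition~\ref{lem:decRain}.

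\emph{Translating the bitwise conditions.} Each row of $M_H$, and hence each row of $\Mred_H$, records which hyperedges contain the corresponding node. So, for two nodes with rows $r_1, r_2$, the string $r_1 \, \& \, r_2$ has a $1$ in coordinate $j$ exactly when $e_j$ contains both nodes; thus $r_1 \, \& \, r_2 = \mathbf{0}$ says that no hyperedge of $H$ (and, by the same computation on $\Mred_H$, no hyperedge of $\Hred$) contains both nodes, while $r_1 \, \& \, r_2 \, \& \, r_3 = \mathbf{0}$ says that no hyperedge contains all three. Since the rows named in the statement are rows of $\Mred_H$, the corresponding nodes are distinct and lie in $\Xred \subseteq X$. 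Hence the two-row hypothesis supplies a set $A$ with $|A| = 2$, and the three-row hypothesis a set $A$ with $|A| = 3$, such that $A \subseteq \Xred \subseteq X$ and $A$ is contained in no hyperedge of $H$ and in no hyperedge of $\Hred$.

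\emph{Applying the lemma and transferring back.} Consider first $H = H(\S)$. Every hyperedge of the $4$-uniform hypergraph $H(Q_\S)$ is a quadruple contained in some hyperedge of $H(\S)$, so the fact that $A$ lies in no hyperedge of $H$ forces $A \not\subseteq q$ for every $q \in Q_\S$. With $r = 4$ we have $2 \le |A| \le r - 1 = 3$ in both cases, so Lemma~\ref{lem:nonneighbors} produces a no-rainbow $4$-coloring of $H(Q_\S)$, and then the equivalence of (ii) and (iii) in Proposition~\ref{lem:decRain} gives a no-rainbow $4$-coloring of $H(\S) = H$. The hypergraph $\Hred$ is handled in exactly the same way, using that $A \subseteq \Xred$ lies in no hyperedge of $\Hred$, that the quadruple hypergraph of $\Hred$ is $4$-uniform with hyperedges that are quadruples inside hyperedges of $\Hred$, and that the argument behind (ii)$\Leftrightarrow$(iii) in Proposition~\ref{lem:decRain} uses nothing about the base hypergraph beyond this relationship.

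\emph{The point needing care.} There is no real obstacle here; the statement really is a direct consequence of Lemma~\ref{lem:nonneighbors}. The one thing to verify is that enough vertices are present for a surjective $4$-coloring to exist at all: for $H$ this is automatic since $n \ge 4$, and for $\Hred$ one should note that $\nred \ge 4$ (if $\nred < 4$ the reduced instance has constant size and can be settled directly). The remaining work is pure bookkeeping: keeping straight the identification of rows of $\Mred_H$ with nodes of both $H$ and $\Hred$, and checking that ``contained in no hyperedge'' is inherited by the corresponding quadruple hypergraphs.
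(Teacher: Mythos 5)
Your proof is correct and follows essentially the same route as the paper, which simply states that the proposition is a direct consequence of Lemma~\ref{lem:nonneighbors}: the zero bitwise AND means the two (or three) nodes corresponding to those rows form a set $A$ with $2\le|A|\le 3$ contained in no hyperedge of $H$ or $\Hred$. Your detour through the $4$-uniform quadruple hypergraph and the (ii)$\Leftrightarrow$(iii) argument of Proposition~\ref{lem:decRain} is just a careful way to respect the lemma's literal uniformity hypothesis (the paper applies the lemma directly, tacitly using that its coloring argument never needs uniformity), and your remark that one needs $\nred\ge 4$ (resp.\ $n\ge 4$) for a surjective $4$-coloring to exist flags a degenerate case the paper silently ignores.
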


\begin{corollary}\label{cor:2k}
If $\Mred_H$ has more than $2^{k-1}$ rows, where $k$ is the number of columns, then $H$ admits a no-rainbow 4-coloring.
 \end{corollary}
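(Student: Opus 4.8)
The plan is to derive Corollary~\ref{cor:2k} directly from Proposition~\ref{prop:AND} by a counting argument on the rows of $\Mred_H$ viewed as binary strings of length $k$. The key observation is that if no two rows of $\Mred_H$ \emph{and} to $\mathbf{0}$, then the supports of the rows (the sets of column indices where the row has a $1$) form a family of subsets of $[k]$ that is pairwise intersecting. So first I would restate the hypothesis: suppose, for contradiction, that $\Mred_H$ has more than $2^{k-1}$ rows but $H$ (equivalently $\Hred$) admits no no-rainbow $4$-coloring. By Proposition~\ref{prop:AND}, no two rows \emph{and} to $\mathbf{0}$; hence the supports of the rows form an intersecting family of nonempty subsets of $[k]$.

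Next I would invoke the elementary fact that an intersecting family of subsets of a $k$-element set has size at most $2^{k-1}$: for each complementary pair $\{S, [k]\setminus S\}$, at most one of the two can belong to an intersecting family, and there are $2^{k-1}$ such pairs (with the pair $\{\emptyset, [k]\}$ contributing only through $[k]$, which is consistent with excluding $\emptyset$ anyway since hyperedges are nonempty, so rows are not all-zero). Therefore $\Mred_H$ has at most $2^{k-1}$ rows, contradicting the assumption. This yields the corollary. I would also note that since distinct rows of $\Mred_H$ correspond to distinct binary strings, the row count equals the number of distinct supports, so the intersecting-family bound applies cleanly.

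A subtle point worth a sentence: Proposition~\ref{prop:AND} actually gives something slightly stronger (it also covers triples \emph{and}ing to $\mathbf{0}$), but for this corollary only the pairwise part is needed; the triple condition would let one push further under additional hypotheses, but the clean $2^{k-1}$ threshold follows from pairs alone. I do not expect any real obstacle here — the argument is a two-line pigeonhole once Proposition~\ref{prop:AND} is in hand; the only thing to be careful about is confirming that $\mathbf{0}$ is not itself a row (true because every hyperedge is nonempty, so every node lies in at least one edge, so no row is all-zero) and that we are counting \emph{distinct} rows, which is exactly what $\Mred_H$ records.
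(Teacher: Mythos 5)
Your proof is correct and takes essentially the same route as the paper: both reduce the statement to Proposition~\ref{prop:AND} and apply the pigeonhole principle to the $2^{k-1}$ complementary pairs of binary strings of length $k$ (the paper extracts two complementary rows, you phrase it contrapositively as the $2^{k-1}$ bound on an intersecting family of supports --- the same count). If anything, your phrasing is a bit tidier at the boundary, since a family of exactly $2^{k-1}$ rows need not contain a complementary or disjoint pair, and the corollary's strict inequality is what both versions actually use.
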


\begin{proof}
Suppose $\nred \ge 2^{k-1}$. Then, there are at least two rows $r_1$ and $r_2$ in $\Mred_H$ that are complements of each other (that is, $r_2$ is is obtained by negating each bit in $r_1$) and, thus, $r_1 \, \& \, r_2 = \mathbf{0}$. The claim now follows from Proposition \ref{prop:AND}.
\QED
\end{proof}

\begin{theorem}\label{thm:fptThm}
Suppose $n \geq \nred+2$.  $H$ admits a no-rainbow 4-coloring if and only if $\Hred$ admits a no-rainbow $r$-coloring for some $r \in \{2,3,4\}$.
\end{theorem}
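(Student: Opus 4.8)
The plan is to exploit the fact that nodes in the same block $X(v)$ (i.e., corresponding to copies of a single row of $M_H$) are interchangeable: they belong to exactly the same set of hyperedges, so any hyperedge $e$ of $H$ "sees" only how many nodes of each block it contains, together with whether the block contributes a color. First I would set up the correspondence between colorings of $H$ and colorings of $\Hred$. Given any no-rainbow $4$-coloring $c$ of $H$, restrict it to one representative per block to get a coloring $\cred$ of $\Hred$; this uses some number $r \in \{1,2,3,4\}$ of colors (after relabeling), and I claim $r \ge 2$ can be arranged, and that $\cred$ is no-rainbow for $\Hred$. Conversely, given a no-rainbow $r$-coloring $\cred$ of $\Hred$ for some $r \in \{2,3,4\}$, I would "blow it up" to a $4$-coloring of $H$ and argue it is still no-rainbow.

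The two directions need slightly different arguments. For the forward direction, suppose $c$ is a no-rainbow $4$-coloring of $H$. Define $\cred(v) = c(v)$ for the chosen representatives in $\Xred$. If some hyperedge $\Ered$ of $\Hred$ were rainbow under $\cred$, the corresponding hyperedge of $H$ would contain nodes of all four colors (the representatives live inside it), contradicting that $c$ is no-rainbow; so $\cred$ is no-rainbow, using at most $4$ colors. It may use fewer than $4$ colors on $\Xred$ (several blocks colored $i$ in $c$ could all be singletons mapping to the same color), but it uses at least... here I must be careful: if $\cred$ used only one color, then every block is monochromatic in that color, so $X$ is monochromatic under $c$, contradicting surjectivity of $c$ (we need $n \ge 4$, which holds). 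So $\cred$ uses $r \in \{2,3,4\}$ colors and is a valid no-rainbow $r$-coloring of $\Hred$.

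For the converse, suppose $\cred$ is a no-rainbow $r$-coloring of $\Hred$ with $r \in \{2,3,4\}$. I need to extend $\cred$ to a surjective $4$-coloring $c$ of $H$ with no rainbow hyperedge. The hypothesis $n \ge \nred + 2$ is what makes room for this: since there are at least two more nodes than blocks, some block $X(v)$ has $|X(v)| \ge 2$, in fact the total "slack" $n - \nred \ge 2$ lets me recolor extra nodes in oversized blocks with fresh colors. Concretely, color every node $u \in X(v)$ by $\cred(v)$; this already gives a no-rainbow $4$-coloring using only $r \le 4$ colors, since a hyperedge $e$ of $H$ rainbow in $c$ would force its image hyperedge in $\Hred$ to meet all four color classes, hence be rainbow (here one must check the image of a hyperedge of $H$ under "collapsing blocks" is exactly the corresponding hyperedge of $\Hred$, which follows from the incidence-matrix construction). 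To upgrade to a surjective $4$-coloring, I would pick the $4 - r$ "missing" colors and reassign them, one each, to spare nodes sitting in oversized blocks: if block $X(v)$ has $t \ge 2$ nodes I may repaint up to $t-1$ of them arbitrarily without creating a rainbow edge, because every hyperedge containing $X(v)$ still contains at least one node of the original color $\cred(v)$, and no hyperedge's color-content can jump from "$\le 3$ colors" to "$4$ colors" unless... — this is the delicate point. The key lemma I would isolate: repainting at most $t-1$ of the $t$ nodes of a block never turns a non-rainbow hyperedge into a rainbow one, *provided* the total number of repainted nodes is at most $n - \nred$ and they lie in distinct... actually the cleanest route is: we need at most $4 - r \le 2$ spare nodes, and $n - \nred \ge 2$ guarantees at least $2$ spare nodes (counting multiplicities across oversized blocks); repaint each spare node with a distinct missing color. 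A hyperedge $e$ that was non-rainbow either (a) already omits some color class entirely among $\{1,\dots,r\}$, in which case after adding at most two new color classes it could become rainbow only if it omitted *exactly* the set of new colors — but then it still omits at most... I expect this bookkeeping to be the main obstacle, and I would handle it by the stronger observation that we may choose the spare nodes so that each lies in a block $X(v)$ that already has a retained representative of color $\cred(v)$ inside $e$ whenever $X(v) \subseteq e$, so $e$ never loses the color $\cred(v)$; then $e$ is rainbow after repainting iff it is rainbow-on-three-colors plus hits both new colors, which for $e$ non-rainbow in $\Hred$ (hence omitting some color $j \le r$ that is not repainted away, since repainted nodes carry *new* colors) cannot happen. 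Filling in this last argument carefully — making sure the non-rainbow-ness witnessed in $\Hred$ survives the blow-up as a color $e$ genuinely omits — is where the real work lies, and it is exactly what the hypothesis $n \ge \nred + 2$ is calibrated for.
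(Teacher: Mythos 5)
Your ``converse'' direction (from a no-rainbow $r$-coloring of $\Hred$ to a no-rainbow $4$-coloring of $H$) is essentially right, and the clean way to finish the bookkeeping you worry about is the observation you yourself reach at the end: after blowing up, every edge $e$ of $H$ has exactly the same color set as its image edge in $\Hred$, hence misses some color $j\in[r]$; since the $4-r\le 2$ spare nodes (which exist because $n\ge\nred+2$ and can be taken to be non-representative copies, so no old color class is emptied) are repainted only with \emph{new} colors, no edge ever acquires the color $j$ it was missing, so no edge becomes rainbow. No further conditions on where the spare nodes sit are needed; this matches the paper's ``if'' direction, which just splits into the cases $r=3$ and $r=2$.

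The genuine gap is in your forward (``only if'') direction. You argue that the restriction $\cred$ of a no-rainbow $4$-coloring $c$ to representatives cannot make any edge of $\Hred$ contain all four colors, and conclude that $\cred$ is a no-rainbow $r$-coloring for whatever number $r$ of colors it uses. But ``rainbow'' is defined relative to the palette actually used: if $\cred$ uses only $r'=2$ or $3$ colors, an edge of $\Hred$ containing all $r'$ of them is a rainbow edge, and this can perfectly well happen even though $c$ is no-rainbow on $H$ --- the fourth color of $c$ may live only on non-representative copies, so the corresponding edge of $H$ never sees four colors. Your dismissal of the one-color case fails for the same reason: $\cred$ constant does \emph{not} force $c$ to be constant on $X$, since copies inside a block may receive colors different from their representative. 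So the restricted coloring may be a $1$-, $2$- or $3$-coloring with a rainbow edge, and the theorem's forward direction requires an additional recoloring argument: one must use the no-rainbow $4$-coloring of $H$ to find, outside $\Xred$, a copy carrying a missing color and swap it in (recolor its representative), possibly iterating, to produce a genuine no-rainbow $r$-coloring of $\Hred$ with $r\in\{2,3,4\}$. This is exactly the content of the paper's Appendix A (its Lemmas on $3$-, $2$- and $1$-colorings), and it is the part your proposal is missing.
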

 
\begin{proof}
(\emph{If}) 
Suppose $\Hred = (\Xred, \Ered)$ admits a no-rainbow $r$-coloring $\cred$ for some $r \in \{2,3,4\}$.  Let $c$ be the coloring for $H$ obtained by setting $c(u) = \cred(v)$, for each $v \in \Xred$ and each $u \in X(v)$.  If $\cred$ is a no-rainbow 4-coloring of $\Hred$, then $c$ is  also one for $H$, and we are done.  Suppose $\cred$ is a 3-coloring.  Since $n \geq \nred+2$, there must exist $v \in \Xred$ such that $|X(v)| \ge 2$.  We choose one node $u \in X(v) \setminus \{v\}$, and set $c(u) = 4$, making $c$ a no-rainbow 4-coloring for $H$.  Suppose $\cred$ is a no-rainbow 2-coloring. If there exists $v \in \Xred \setminus\{v\}$ such that $|X(v)| \ge 3$, we pick pick any $u, w \in \Xred \setminus\{v\}$, and set $c(u) = 3$ and $c(w) = 4$. If there is no $v \in \Xred$ such that $|X(v)| \ge 3$, there must exist $v_1 , v_2 \in \Xred$ such that $|X(v_i)| \ge 2$ for $i \in \{1,2\}$.  For $i \in \{1,2\}$, choose any $u_i \in X(v_i) \setminus \{v_i\}$ and set $c(u_i) = i+2$.

(\emph{Only if})
Suppose $H$ has a no-rainbow 4-coloring $c$. Let $\cred$ be the coloring of $\Hred$ where, for each $v \in \Xred$, $\cred(v) = c(u)$, for some arbitrarily chosen node in $u \in X(v)$.  If $\cred$ is a surjective 4-coloring of $v$, $\cred$ must be a no-rainbow 4-coloring of $\Hred$, and we are done.  In Appendix \ref{appA}, we show that any $r$-coloring of $\Hred$, where $r \in \{1,2,3,4\}$ can be converted into a no-rainbow 4-coloring of $\Hred$ by altering some of the colors assigned by $\cred$.
\QED
 \end{proof}
 
\begin{theorem}\label{thm:fpt}
 Decisiveness is fixed-parameter tractable in $k$.
\end{theorem}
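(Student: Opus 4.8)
The plan is to combine the reduction rules developed in this section with the exact algorithm of Section~\ref{sec:exponalg} to obtain an algorithm whose running time is bounded by a function of $k$ alone (times a polynomial in the input size). First I would note that, given a taxon coverage pattern $\S$ with $k$ loci, we form the hypergraph $H(\S) = (X,\S)$ and, by Proposition~\ref{lem:decRain}, $\S$ is decisive if and only if $H(\S)$ admits no no-rainbow $4$-coloring. The incidence matrix $M_{H(\S)}$ has exactly $k$ columns, so after striking out duplicate rows we obtain $\Mred$ with $\nred \le 2^k$ rows, the incidence matrix of the reduced hypergraph $\Hred = (\Xred, \Ered)$ on $\nred$ vertices.

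Next I would split into two cases according to how $n$ compares with $\nred$. If $n \le \nred + 1 \le 2^k + 1$, then the whole instance already has size bounded by a function of $k$, so we may simply run the exact algorithm \FindNRC from Lemma~\ref{lem:exactExpon} directly on $H(Q_\S)$ (or argue via Theorem~\ref{thm:decisiveness} by brute force over $4$-way partitions); either way the running time is $O^*(2.8^{2^k+1})$, a function of $k$ times a polynomial. If instead $n \ge \nred + 2$, then Theorem~\ref{thm:fptThm} applies: $H(\S)$ admits a no-rainbow $4$-coloring if and only if $\Hred$ admits a no-rainbow $r$-coloring for some $r \in \{2,3,4\}$. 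I would check the three possibilities for $r$ one at a time. For $r = 2$, Corollary~\ref{cor:2RNC} (equivalently, Proposition~\ref{prop:ccs}(ii)) lets us decide in polynomial time whether $\Hred$ is disconnected. For $r = 3$ and $r = 4$, $\Hred$ has at most $2^k$ vertices, so we can decide the existence of a no-rainbow $r$-coloring by brute force over all surjective $r$-colorings of $\Xred$ in time $O^*(r^{\nred}) = O^*(4^{2^k})$, again a function of $k$ times a polynomial; alternatively one can invoke \FindNRC-style reasoning, but the crude bound already suffices for fixed-parameter tractability. Combining the two cases, the total time is bounded by $f(k) \cdot \mathrm{poly}(n,k)$ for a computable function $f$, which is exactly the definition of fixed-parameter tractability in $k$.

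The main obstacle, and the one step that needs care, is making sure the case $n \ge \nred + 2$ is handled by a \emph{correct} reduction: the equivalence in Theorem~\ref{thm:fptThm} is what lets us pass from the (possibly huge) hypergraph $H(\S)$ to the bounded-size hypergraph $\Hred$, and its ``only if'' direction relies on the appendix argument that an arbitrary $r$-coloring of $\Hred$ with $r \in \{1,2,3,4\}$ can be massaged into a no-rainbow $4$-coloring of $\Hred$. Given that lemma, the rest is bookkeeping: verifying that constructing $\Hred$ from $\S$ takes polynomial time, that $\nred \le 2^k$, and that the brute-force searches over colorings of an at-most-$2^k$-vertex hypergraph run in time depending only on $k$. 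I would also remark that Corollary~\ref{cor:2k} gives an additional early-exit shortcut (if $\nred > 2^{k-1}$ we immediately answer ``not decisive''), but this is an optimization rather than something needed for the fixed-parameter claim.
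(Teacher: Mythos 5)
Your proposal is correct and follows essentially the same route as the paper: construct the reduced hypergraph $\Hred$ with $\nred \le 2^k$ vertices, invoke Theorem~\ref{thm:fptThm} together with Corollary~\ref{cor:2RNC}, and decide no-rainbow $r$-colorability for $r \in \{3,4\}$ on the bounded-size instance (the paper uses the Section~\ref{sec:exponalg} algorithm for $O^*(2.8^{2^k})$, while your brute force gives $O^*(4^{2^k})$ --- both suffice for FPT). If anything, your explicit handling of the case $n \le \nred + 1$, where Theorem~\ref{thm:fptThm} does not apply, is a point of care the paper's own proof passes over silently.
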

 
\begin{proof}
Let $\S$ be the input coverage pattern.  First, in  $O^*(2^k)$ time, we construct $\Hred(S)$.
 By Theorem \ref{rootedDecisiveness}, we need to test if $\Hred(\S)$ admits a non-rainbow $r$-coloring for any $r \in \{2,3,4\}$.  If the answer is ``yes'' for any such $r$, then $\S$ is not decisive; otherwise $\S$ is decisive.   By Corollary \ref{cor:2RNC}, the test for $r =2$ takes polynomial time. We perform the steps for $r =3$ and $r=4$ using the algorithm of Section \ref{sec:exponalg}.  The total time is $O^*(2.8^{\nred})$, which is $O^*(2.8^{2^{k}})$. \QED
\end{proof}
 


\section{An Integer Linear Programming Formulation\label{sec:ilp}}

Let $\S = \{Y_1, Y_2, \dots , Y_k\}$ be a taxon coverage pattern for $X$.  Here we formulate a 0-1 integer linear program (ILP) that is feasible if and only if \S\ is non-decisive\footnote{For an introduction to the applications of integer linear programming, see \cite{Gusfield_IPinCB2018}.}.  We use the equivalence between non-decisiveness of \S\ and the existence of a no-rainbow $4$-coloring of hypergraph $H(\S)$ (Proposition \ref{lem:decRain}).

Suppose $X = \{a_1, a_2, \dots, a_n\}$.  For each $i \in [n]$ and each color $q \in [4]$, define a binary \emph{color variable} $x_{iq}$, where $x_{iq} = 1$ if taxon $i$ is assigned color $q$. To ensure that each $i \in X$ is assigned only one color, we add constraints 
\begin{equation}\label{eqn:colorConstr}
\sum_{q \in [4]} x_{iq} = 1,\quad \text{for each $i \in X$}.
\end{equation}

The following constraints ensure that each color $q \in [4]$ appears at least once.
\begin{equation}\label{eqn:atLeastConstr}
\sum_{i \in X} x_{iq} \ge 1,\quad \text{for each $q \in [4]$}.
\end{equation}

To ensure that, for each $j \in [k]$, $Y_j$ is not rainbow colored, we require that there exist at least one color that is not used in $Y_j$; i.e, that $\sum_{i \in Y_j} x_{iq} = 0$, for some $q \in [4]$.  To express this condition,
for each $j \in [k]$ and each $q \in [4]$, we define a binary variable $z_{jq}$, which is $1$ if and only if $\sum_{i \in Y_j} x_{iq} = 0$.  We express $z_{jq}$ using the following linear constraints.
\begin{equation}\label{eqn:orRel}
 (1 - z_{jq}) \le \sum_{i \in Y_j} x_{iq} \le n \cdot (1-z_{jq}),\quad \text{for each $j \in [k]$ and each $q \in [4]$}
\end{equation}

The requirement that $Y_j$ not be rainbow-colored is expressed as
\begin{equation}\label{eqn:atLeastOneZeroBool}
\sum_{q \in [4]} z_{jq} \ge 1, \quad \text{for each $j \in [k]$.}
\end{equation}

\begin{proposition}\label{prop:ILP}
$\S$ is non-decisive if and only if the 0-1 ILP with variables $x_{iq}$ and $z_{jq}$ and constraints  \eqref{eqn:colorConstr}, \eqref{eqn:atLeastConstr}, \eqref{eqn:orRel}, and \eqref{eqn:atLeastOneZeroBool} is feasible.
\end{proposition}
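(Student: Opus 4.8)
The plan is to reduce everything to Proposition~\ref{lem:decRain}, which states that $\S$ is non-decisive if and only if the hypergraph $H(\S)$ admits a no-rainbow $4$-coloring. It therefore suffices to show that the set of feasible $0$-$1$ solutions of the ILP, projected onto the color variables $x_{iq}$, is exactly the set of no-rainbow $4$-colorings of $H(\S)$.

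First I would observe that the color variables together with constraints~\eqref{eqn:colorConstr} and~\eqref{eqn:atLeastConstr} encode precisely the surjective $4$-colorings of $X$. Indeed, \eqref{eqn:colorConstr} forces each taxon $a_i$ to receive exactly one color, so any $0$-$1$ assignment of the $x_{iq}$ satisfying~\eqref{eqn:colorConstr} defines a map $c\colon X \to [4]$ by $c(a_i) = q \iff x_{iq}=1$, and conversely every such $c$ arises this way; and~\eqref{eqn:atLeastConstr} holds exactly when $c$ is surjective.

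Second, and this is the only point that needs a genuine (if short) argument, I would verify that, given any $0$-$1$ assignment of the $x_{iq}$ satisfying~\eqref{eqn:colorConstr}, the variables $z_{jq}$ subject to~\eqref{eqn:orRel} are forced to satisfy $z_{jq}=1 \iff \sum_{i\in Y_j} x_{iq}=0$. Fix $j$ and $q$ and write $s = \sum_{i\in Y_j} x_{iq}$, an integer with $0 \le s \le n$. If $z_{jq}=1$, then~\eqref{eqn:orRel} becomes $0 \le s \le 0$, so $s=0$; if $z_{jq}=0$, it becomes $1 \le s \le n$, so $s \ge 1$. Hence $s=0$ compels $z_{jq}=1$ and $s \ge 1$ compels $z_{jq}=0$. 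Consequently $z_{jq}=1$ exactly when color $q$ is absent from $Y_j$, and constraint~\eqref{eqn:atLeastOneZeroBool} asserts that for every $j\in[k]$ some color is missing from $Y_j$, i.e.\ that $Y_j$ is not a rainbow edge of $H(\S)$.

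Combining these observations gives both directions. A feasible ILP solution yields, through the $x_{iq}$, a surjective $4$-coloring of $X$ under which no $Y_j\in\S$ is rainbow, i.e.\ a no-rainbow $4$-coloring of $H(\S)$; conversely, given a no-rainbow $4$-coloring $c$ of $H(\S)$, setting $x_{iq}=1 \iff c(a_i)=q$ and $z_{jq}=1 \iff \sum_{i\in Y_j}x_{iq}=0$ satisfies~\eqref{eqn:colorConstr}--\eqref{eqn:atLeastOneZeroBool}. Proposition~\ref{lem:decRain} then finishes the proof. The main (and essentially only) obstacle is the middle step: checking that the ``big-$M$''-style inequalities~\eqref{eqn:orRel} genuinely pin down each $z_{jq}$ as the indicator of ``color $q$ absent from $Y_j$''; once that is established the remainder is routine bookkeeping.
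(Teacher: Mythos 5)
Your proof is correct and follows the same route the paper intends: the paper treats the proposition as an immediate consequence of the construction in Section~\ref{sec:ilp} (constraints \eqref{eqn:colorConstr}--\eqref{eqn:atLeastConstr} encode surjective $4$-colorings, \eqref{eqn:orRel} pins $z_{jq}$ to the indicator of ``color $q$ absent from $Y_j$'', \eqref{eqn:atLeastOneZeroBool} forbids rainbow edges), combined with Proposition~\ref{lem:decRain}. Your explicit case check that \eqref{eqn:orRel} forces $z_{jq}=1 \iff \sum_{i\in Y_j}x_{iq}=0$ is exactly the bookkeeping the paper leaves implicit.
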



%
%

\paragraph{Experimental Results.}
Here we summarize our computational results using ILP on the data sets studied in \cite{DobrinZwicklSanderson2018}.  For details, see Appendix \ref{app:ILP}.

We generated ILP models for all the data sets in  \cite{DobrinZwicklSanderson2018} and
used Gurobi \cite{gurobi} to solve 9 of these models. All but one of these models were solved in under 0.1 seconds.    
Only one of the data sets, Insects, is decisive (and its ILP took the longest to solve).  Indeed, the Insects data set is trivially decisive, as one locus spans all the taxa.  For the remaining data sets, we used a simple heuristic to identify a subset of the taxa for which the data is decisive. If the data set is non-decisive, we remove the taxon covered by the fewest loci, breaking ties in favor of the first taxon in the input.  After removing a taxon, we update the ILP model and run it again. 
When the model becomes infeasible, the remaining data set is decisive. 

For two data sets (Saxifragales and Mammals), the heuristic yielded trivially decisive coverage patterns. 
We obtained non-trivial results for three data sets.
For the complete Birds data set, the largest of all, the heuristic took 1.1 hours. 
Although the heuristic retained only 2.5\% of the original taxa, every family of taxa except one from the original data set 
is represented in the final result. For Bats, the heuristic took 70 seconds and achieved 4.3\% coverage, but had sparse coverage across the families.  For Primates, the heuristic took 33 seconds and achieved 50.3\% coverage, distributed over most families. For all data sets, the most time-consuming step was attempting to solve the final, infeasible, ILP.

%


\section{Discussion\label{sec:discussion}}

Despite its apparent complexity, the decisiveness problem appears to be quite tractable in practice.  Since real data sets are likely to be non-decisive, testing for decisiveness can only be considered a first step.  Indeed, if we determine that a data set is not decisive, it is useful to find a subset of the data that is decisive.  In Section \ref{sec:ilp}, we have taken some preliminary steps in that direction, using a simple heuristic. This heuristic could potentially be improved upon, perhaps relying on the data reduction ideas of Section \ref{sec:redFPT}. One open problem is whether the doubly-exponential algorithm of Theorem \ref{thm:fpt} can be improved.

\paragraph{Acknowledgements.}  M.\ Steel pointed out the connection between decisiveness and hypergraph coloring. We thank M.\ Sanderson for useful discussions.  He and B.\ Dobrin provided the data studied in section \ref{sec:ilp}.

\appendix

\newpage

\section{Proof of the ``Only If'' Direction of Theorem \ref{thm:fptThm}\label{appA}}

Here we show that if $H = (X,E)$ admits a no-rainbow 4-coloring $c$, then $\Hred = (\Xred, \Ered)$ admits a no-rainbow $r$-coloring for some $r \in \{2,3,4\}$.  

Let $\cred$ be the coloring of $\Hred$ where, for each $v \in \Xred$, $\cred(v) = c(v)$.

\begin{lemma}\label{lem:3color}
Suppose \cred\ is a 3-coloring of $\Hred$ that has a rainbow edge.  Then, we can transform $\cred$ into a no-rainbow 4-coloring $\widehat{c}$ of $\Hred$.
\end{lemma}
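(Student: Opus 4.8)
The plan is to exploit that $\cred$ is the restriction to $\Xred$ of a \emph{surjective} no-rainbow $4$-coloring $c$ of $H$: since $\cred$ uses only three colors, the fourth color of $c$ --- call it color $4$ after renaming --- is used by $c$ only on vertices of $X\setminus\Xred$, that is, on vertices that duplicate (share a row of $M_H$ with) some representative in $\Xred$. Fix such a vertex $w$ with $c(w)=4$ and let $v^\star\in\Xred$ be the representative of the row of $w$. Two elementary facts drive everything: (a) $w$ and $v^\star$ have identical rows of $M_H$, hence belong to exactly the same edges of $H$; and (b) every edge of $\Hred$ is of the form $e_j\cap\Xred$ for an edge $e_j$ of $H$, hence is a subset of $e_j$.

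First I would define the candidate recoloring $\widehat c$ of $\Xred$ by $\widehat c(v^\star)=4$ and $\widehat c(u)=\cred(u)$ for every other $u\in\Xred$; this is a transformation of $\cred$ that alters a single vertex.

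Next I would verify that $\widehat c$ has no rainbow edge. Let $f=e_j\cap\Xred$ be an edge of $\Hred$. If $v^\star\notin f$, then $f$ only sees colors in $\{1,2,3\}$ under $\widehat c$, so it is not rainbow. If $v^\star\in f$, then $v^\star\in e_j$, so by (a) also $w\in e_j$; since $c$ is no-rainbow on $H$ and $e_j$ already contains the color-$4$ vertex $w$ and the vertex $v^\star$ of color $\cred(v^\star)\in\{1,2,3\}$, the four-element edge $e_j$ must miss some color $i^\star\in\{1,2,3\}\setminus\{\cred(v^\star)\}$. As $f\subseteq e_j$ and recoloring $v^\star$ to $4$ (with $i^\star\neq 4$ and $i^\star\neq\cred(v^\star)$) creates no vertex of color $i^\star$, the edge $f$ still misses $i^\star$ under $\widehat c$ and is not rainbow. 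Hence $\widehat c$ is a no-rainbow $4$-coloring provided it is surjective.

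The remaining --- and most delicate --- point is surjectivity. Clearly $\widehat c$ uses color $4$ (at $v^\star$) and uses colors $\{1,2,3\}$ unless $v^\star$ happens to be the \emph{only} vertex of $\Xred$ colored $\cred(v^\star)$ by $\cred$. To repair that degenerate case I would invoke the rainbow edge $e$ supplied by the hypothesis: $e\subseteq\Xred$ realizes all three colors, so (as $v^\star$ is the unique vertex of its color) $v^\star\in e$ and $e\setminus\{v^\star\}\neq\emptyset$ already realizes the two colors different from $\cred(v^\star)$; one then either reintroduces the lost color on an appropriate vertex of $e$, or restarts the construction with a different hidden color-$4$ vertex whose representative is not color-unique, re-checking the no-rainbow property by the same ``$f\subseteq e_j$'' argument, and separately disposing of the tiny situations in which $\Xred$ is simply too small to support four colors. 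I expect this surjectivity bookkeeping to be the real obstacle; the no-rainbow verification above is immediate once one sees that edges of $\Hred$ are sub-edges of those of $H$ and that $v^\star$ is a twin of a vertex already colored $4$ by $c$.
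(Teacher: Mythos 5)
Your construction is in essence the paper's own: pick a vertex $w\in X\setminus\Xred$ carrying the fourth colour of $c$ (it exists because $c$ is surjective while $\cred$ uses only three colours on the representatives) and recolour its representative $v^\star$ with colour $4$. Your no-rainbow verification is correct: every edge of $\Hred$ has the form $f=e_j\cap\Xred\subseteq e_j$, and when $v^\star\in f$ the edge $e_j$ contains both $w$ and $v^\star$, hence misses some colour $i^\star\in\{1,2,3\}\setminus\{\cred(v^\star)\}$, which $f$ still misses after the recolouring. (Your phrase ``four-element edge $e_j$'' is a harmless slip; in the setting of Theorem \ref{thm:fptThm} the hypergraph $H$ need not be $4$-uniform, but your argument never uses uniformity.)

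The genuine gap is the surjectivity case you explicitly leave open. The missing observation is that the degenerate case cannot occur, by the very twin argument you already use: if $v^\star$ were the unique vertex of colour $\cred(v^\star)$, then the rainbow edge $f=e_j\cap\Xred$ guaranteed by the hypothesis, which meets all three colours of $\cred$, would have to contain $v^\star$; but $e_j$ also contains the twin $w$ of colour $4$, so $e_j$ would carry all four colours under $c$, contradicting that $c$ is a no-rainbow colouring of $H$. Hence $\widehat c$ is automatically surjective, and this also disposes of your worry about $\Xred$ being ``too small'' (e.g.\ $|\Xred|=3$ would force every vertex to be colour-unique, which is impossible by the same contradiction). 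Your proposed fallbacks are not proofs as stated: there may be no second hidden colour-$4$ vertex whose representative is not colour-unique, and ``reintroducing the lost colour on a vertex of $e$'' is a different recolouring whose no-rainbow property you never verify. The paper avoids the issue by fixing the three rainbow-edge vertices $v_1,v_2,v_3$ first and swapping $w$ for its copy, which, by the argument above, cannot be one of $v_1,v_2,v_3$, so all four colours are witnessed.
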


\begin{proof}
Consider three nodes $v_1,v_2,v_3 \in \Xred$ that are all neighbors and have distinct colors. Since $\Xred \subseteq X$, $v_1,v_2, v_3$ are neighbors in $H$ as well. Since $H$ admits a no-rainbow 4-coloring, there must exist a node $v_4 \in X \setminus \{v_1,v_2, v_3\}$ such that $c(v_4) \neq c(v_i)$, for $i\in\{1,2,3\}$. Since $v_4 \notin \Xred$ there must be a copy of $v_4$ with a different color in $\Xred$. If we insert $v_4$ to $\Xred$ and remove its copy from $\Xred$, then $\Xred$ has 4 colors and the new coloring is a no-rainbow 4-coloring. 
\QED
\end{proof}

\begin{lemma}\label{lem:2color}
Suppose \cred\ is a 2-coloring of $\Hred$, that has a rainbow edge.  Then, we can transform $\cred$ into a no-rainbow $r$-coloring $\widehat{c}$ of $\Hred$, for $r\in\{3,4\}$. 
\end{lemma}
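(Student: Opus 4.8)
The plan is to imitate the recoloring argument of Lemma~\ref{lem:3color}, but now starting from only two colors, so that we must import \emph{both} missing colors $3$ and $4$ (or at least one of them) from the copies that $c$ shades with these colors. Normalize so that $\cred$ uses colors $1$ and $2$. Since $c$ is surjective and $\cred = c|_{\Xred}$, the sets $c^{-1}(3)$ and $c^{-1}(4)$ are nonempty and disjoint from $\Xred$; hence every node colored $3$ or $4$ by $c$ is a non-representative copy of some vertex of $\Xred$. Fix a rainbow edge $e$ of $\cred$ and let $v_1,v_2\in e\cap\Xred$ satisfy $\cred(v_1)=1$ and $\cred(v_2)=2$. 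The central observation, exactly as in Lemma~\ref{lem:3color}, is that recoloring a representative $w$ with the color of one of its same-incidence copies never creates a rainbow edge, because an edge of $\Ered$ through $w$ comes from an edge $e'\in E$ that contains $w$ together with that copy, and $|e'|=4$ together with ``$c$ is no-rainbow'' forces $e'\cap\Xred$ to carry few colors.

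First I would split into two cases according to whether colors $3$ and $4$ can be sourced from two \emph{distinct} representatives. \emph{Case 1:} there are distinct $w_3,w_4\in\Xred$ such that $X(w_3)$ contains a node $u_3$ with $c(u_3)=3$ and $X(w_4)$ contains a node $u_4$ with $c(u_4)=4$. Define $\widehat c$ to agree with $\cred$ except that $\widehat c(w_3)=3$ and $\widehat c(w_4)=4$. For the no-rainbow check I would go edge by edge: an edge of $\Ered$ through $w_3$ comes from $e'\in E$ with $\{w_3,u_3\}\subseteq e'$, hence $e'\cap\Xred$ loses $u_3$ and (since $c$ is no-rainbow and $|e'|=4$) retains at most two $\widehat c$-colors; symmetrically for $w_4$; in the extreme case $e'=\{w_3,u_3,w_4,u_4\}$, which under $c$ must repeat a color, so $e'\cap\Xred=\{w_3,w_4\}$ carries only colors $3$ and $4$; and edges missing both $w_3$ and $w_4$ keep their old $\cred$-colors. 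So every edge uses at most three colors and cannot be rainbow once $\widehat c$ uses $3$ or $4$ colors. For surjectivity I would use the fixed rainbow edge $e$: since $c$ is no-rainbow and $|e|=4$, $e$ cannot contain both $w_3$ and $w_4$ (otherwise $e=\{w_3,u_3,w_4,u_4\}$ would force $\cred(w_3)=\cred(w_4)$, contradicting that $e\cap\Xred$ contains both colors $1$ and $2$); tracing the remaining possibilities ($e$ contains neither, or exactly one of $w_3,w_4$) shows at least one of $v_1,v_2$ is left unrecolored, so $\widehat c$ is a surjective $r$-coloring for some $r\in\{3,4\}$.

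\emph{Case 2:} all nodes colored $3$ or $4$ by $c$ are copies of a \emph{single} vertex $v\in\Xred$; then $|X(v)|\ge 3$, with copies $u_3,u_4$ of colors $3$ and $4$. Here I would recolor only $v$, setting $\widehat c(v)=3$, producing a candidate $3$-coloring. The same edge-tracing argument shows that every edge of $\Ered$ through $v$ comes from $e'\in E$ with $\{v,u_3,u_4\}\subseteq e'$, hence $e'=\{v,u_3,u_4,x\}$ and $e'\cap\Xred\subseteq\{v,x\}$ carries at most two colors, while edges avoiding $v$ use at most two colors; so $\widehat c$ is no-rainbow. For surjectivity I would show $v\notin e$ (otherwise $e\supseteq\{v,u_3,u_4\}$, and being $4$-uniform $e$ would be rainbow under $c$), so $v_1,v_2\neq v$ persist and $\widehat c$ attains the colors $\{1,2,3\}$.

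I expect the main obstacle to be the surjectivity bookkeeping rather than the ``no new rainbow'' verification: after recoloring we must guarantee that $\widehat c$ still attains at least three colors, and a careless choice could erase a color whose sole representative was recolored, collapsing down to a $2$-coloring (which the statement does not allow). The remedy, which I would write out carefully, is precisely the exploitation of the guaranteed rainbow edge $e$ above: its endpoints $v_1,v_2$ cannot both be among the recolored vertices, so at least one of the original colors $1,2$ survives, pinning $r$ into $\{3,4\}$. Everything else — identifying the forced shape of edges $e'$ meeting the recolored vertices, and reading off that $e'\cap\Xred$ has too few colors to be rainbow — is routine, being the same bookkeeping already used in Lemma~\ref{lem:3color}.
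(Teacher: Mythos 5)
Your overall strategy---recolor one or two representatives with the colors $3$ and $4$ carried by their copies, and exploit the fact that a copy and its representative lie in exactly the same hyperedges---is sound in spirit, and it is essentially a one-shot, self-contained version of the paper's argument (the paper instead performs a single swap to obtain a $3$-coloring and, if that coloring still has a rainbow edge, escalates via Lemma~\ref{lem:3color}). However, there is a genuine gap: you assume throughout that the edges of $H$ have size exactly $4$, and that assumption is not available here. Lemma~\ref{lem:2color} is used to prove Theorem~\ref{thm:fptThm}, where $H$ is a general hypergraph---in the application of Theorem~\ref{thm:fpt} it is $H(\S)$, whose $k$ hyperedges are the coverage sets $Y_j$ of arbitrary size (this is exactly what makes the reduction to $\nred\le 2^k$ vertices work); nothing restricts attention to the $4$-uniform $H(Q_\S)$. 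Consequently every step in which you write $|e'|=4$ is unjustified: the counting claim that ``$e'\cap\Xred$ retains at most two $\widehat{c}$-colors'' (which is in fact already false for a $4$-uniform edge such as $e'=\{w_3,u_3,a,b\}$ with $c(a)=1$, $c(b)=2$: three colors survive, even though the edge is still not rainbow), the identification $e'=\{v,u_3,u_4,x\}$ in your Case~2, and, most critically, the surjectivity argument in Case~1 that the rainbow edge cannot contain both $w_3$ and $w_4$ because it would have to equal $\{w_3,u_3,w_4,u_4\}$---with larger edges it certainly can contain both, and your guarantee that a color from $\{1,2\}$ survives collapses, which is precisely the failure mode (dropping back to a $2$-coloring) that you yourself flag as the delicate point.

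The gap is repairable, because the only structural fact you actually need is the copy-incidence property you state at the outset. In the general setting: if an edge of $\Ered$ through a recolored vertex were rainbow under $\widehat{c}$, the corresponding edge of $E$ would contain $u_3$ and $u_4$ together with representatives of the surviving colors from $\{1,2\}$, hence would be rainbow under $c$, a contradiction; and if the recoloring erased both colors $1$ and $2$, surjectivity of $\cred$ would force $\Xred=\{w_3,w_4\}$, whereupon the edge of $E$ inducing the rainbow edge of $\cred$ would again be rainbow under $c$. (A similar uniformity-free argument shows $v\notin e$ in your Case~2.) Your case split itself is exhaustive and fine; if you replace the cardinality-based reasoning by these arguments, you obtain a correct alternative to the paper's shorter swap-then-apply-Lemma~\ref{lem:3color} proof.
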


\begin{proof}
Consider any two neighbor nodes $v_1,v_2 \in \Xred$ such that $v_1$ and $v_2$ have distinct colors. Note that $v_1$ and $v_2$ must be neighbors in $H$ as well. Since $H$ admits no-rainbow 4-coloring, there must be at least one other node $v_3 \in X$ such that $v_3 \neq v_i$, for  $i\in\{1,2\}$ and $c(v_3) \neq c(v_i)$, for $i\in\{1,2\}$. Since $v_3 \notin \Xred$ there must be a copy of $v_3$ with another color in $\Hred$. If we insert $v_3$ with color $c(v_3)$ to $\Hred$ and remove its copy from $\Hred$, then $\Hred$ has 3 colors. If the new coloring is a no-rainbow 3-coloring we are done, otherwise the new coloring is a 3-coloring with a rainbow edge. In the latter case, we use Lemma \ref{lem:3color} to find a no-rainbow 4-coloring. \QED
\end{proof}

\begin{lemma}\label{lem:1color}
Suppose \cred\ is a 1-coloring of $\Hred$.  Then, we can transform $\cred$ into a no-rainbow $r$-coloring $\widehat{c}$ of \Hred, for $r\in\{2,3,4\}$.
\end{lemma}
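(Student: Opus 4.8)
The plan is to mimic Lemmas~\ref{lem:3color} and~\ref{lem:2color}: starting from the monochromatic coloring $\cred$, re-select one representative in $\Xred$ so as to introduce a second color, and then fall back on those two lemmas. Throughout this argument $\cred$ denotes $c$ restricted to the current choice of $\Xred$ (one representative per row class of $M_H$); re-selecting a representative inside a row class does not change $\Hred$ as an abstract hypergraph; and a \emph{surjective} $4$-coloring of $\Hred$ arising this way is automatically no-rainbow, because a rainbow edge $e_j\cap\Xred$ of $\Hred$ would already display all four colors inside the edge $e_j$ of $H$, contradicting that $c$ is no-rainbow on $H$.

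I would first deal with the vacuous cases. If $\Hred$ has no edges (equivalently, $H$ has none), then any surjective coloring of $\Hred$ is no-rainbow, so it suffices to $2$-color $\Xred$, which is possible whenever $\nred\ge 2$; the remaining possibility $\nred=1$ arises only for the trivial empty coverage pattern and can be excluded. So assume $\Hred$ has at least one edge. Then $\nred\ge 2$: if all rows of $M_H$ coincided, $4$-uniformity would force $H$ to have the single edge $X$ with $|X|=4$, which admits no no-rainbow $4$-coloring, contrary to hypothesis. Moreover, since $\cred$ uses only color $1$, every edge of $\Hred$ is monochromatic.

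Next I would introduce a second color. As $c$ is surjective, color $2$ is used by some node of $X$; since $\cred$ uses only color $1$ on $\Xred$, that node lies in $X\setminus\Xred$, say $u_2\in X(v_2)$ with $v_2\in\Xred$ and $c(v_2)=1$. Re-selecting $u_2$ as the representative of $X(v_2)$ yields a coloring $\cred_1$ of $\Hred$ that gives $u_2$ color $2$ and every other vertex of the updated $\Xred$ color $1$; as $\nred\ge 2$, $\cred_1$ is a surjective $2$-coloring of $\Hred$. If $\cred_1$ has no rainbow edge, it is the desired no-rainbow $2$-coloring. Otherwise $\cred_1$ is a $2$-coloring of $\Hred$ with a rainbow edge, so Lemma~\ref{lem:2color} converts it into a no-rainbow $r$-coloring with $r\in\{3,4\}$. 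In every case we obtain a no-rainbow $r$-coloring with $r\in\{2,3,4\}$, as required.

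The main line of argument is short because Lemmas~\ref{lem:3color} and~\ref{lem:2color} already carry the substantive work of promoting a coloring with too few colors to a no-rainbow one. I expect the only friction to be bookkeeping: verifying that the single swap really produces a \emph{surjective} $2$-coloring (the $\nred\ge 2$ point) and that $\cred_1$ has exactly the form assumed by Lemma~\ref{lem:2color}, so that its proof --- which again appeals to the surjectivity of $c$ on $X$ --- applies verbatim.
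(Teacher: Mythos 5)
Your proposal is correct and follows essentially the same route as the paper: use surjectivity of $c$ on $X$ to find a node of a second color outside $\Xred$, swap it in as the representative of its row class to obtain a $2$-coloring of $\Hred$, and if that coloring has a rainbow edge, fall back on Lemma~\ref{lem:2color}. Your extra bookkeeping (edge-free case, $\nred\ge 2$) is harmless, though the appeal to $4$-uniformity there is not available in general --- $H$ need not be uniform in Theorem~\ref{thm:fptThm} --- but the same conclusion follows anyway, since $\nred=1$ with at least one edge forces that edge to be all of $X$, which would be rainbow under the surjective coloring $c$.
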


\begin{proof}
Since $H$ admits no-rainbow 4-coloring, we can replace one of the nodes of $\Hred$ with a copy of that in $H$ of different color. The result is a 2-coloring for $\Hred$. If the result is a no-rainbow 2-coloring we are done, otherwise we have a 2-coloring of $\Hred$ with a rainbow edge. In the latter case, we use Lemma \ref{lem:2color} to find a no-rainbow $r$-coloring for $r\in \{3,4\}$.
\QED
\end{proof}

%

To prove of the ``only if'' direction of Theorem \ref{thm:fptThm}, note first that if \cred\ is a no-rainbow $r$-coloring of $\Hred$, for $r \in \{2,3,4\}$, we are done.  Notice that if $\cred$ is a 4-coloring of $\Hred$, it must be a no-rainbow coloring of \Hred. Otherwise, $\cred$ is an $r$-coloring of $\Hred$, for $r \in \{1,2,3\}$, that has a rainbow edge.  In this case, we apply Lemmas \ref{lem:3color}, \ref{lem:2color}, \ref{lem:1color}, as appropriate, to obtain a no-rainbow $r$-coloring of $\Hred$, for some $r \in \{2,3,4\}$.
\QED

\newpage

\section{Computational Results Using ILP\label{app:ILP}}

We wrote a Python script that given a taxon coverage pattern, generates an ILP model as described in Section \ref{sec:ilp}. Table \ref{tab:ILP} shows the time taken to generate the ILP models for the data sets analyzed in  \cite{DobrinZwicklSanderson2018} (see the latter reference for full citations of the corresponding phylogenetic studies). The models were generated on a Linux server.
Table \ref{tab:ILPsizes} shows the sizes of several of these ILPs.  Table \ref{tab:ILPtimes} shows the time taken by Gurobi to solve each of the latter ILPs, on a Lenovo Thinkpad X1 Carbon running Windows.

\begin{table}
\caption{Running times for generating ILPs for data sets studied in \cite{DobrinZwicklSanderson2018}.}\label{tab:ILP}
\begin{center}
\begin{tabular}{|l|c|c|c|}
 \hline
& ~Execution Time~ & ~Number of~  & ~Number of~  \\
 Data Set & (seconds) & Taxa & Loci \\ 
\hline
Allium&0.051037&57&6\\
Asplenium&0.047774&133&6\\
Bats&0.152805&815&29\\
Birds (complete)&4.688950&7000&32\\
Birds &2.723334&5146&32\\
Caryophyllaceae&0.068084&224&7\\
Chameleons&0.059073&202&6\\
Eucalyptus&0.058591&136&6\\
Euphorbia&0.061188&131&7\\
Ficus&0.063072&112&5\\
Fungi&0.223971&1317&9\\
Insects&7.649374&144&479\\
Iris&0.055743&137&6\\
Mammals&0.110263&169&26\\
Primates&0.363623&372&79\\
Primula&0.064607&185&6\\
Scincids&0.071276&213&6\\
Ranunculus&0.059699&170&7\\
Rhododendron&0.052903&117&7\\
Rosaceae&0.092148&529&7\\
Solanum&0.062660&187&7\\
Saxifragales&0.173522&946&51\\
Szygium&0.051021&106&5\\
 \hline
\end{tabular}
\end{center}
\end{table}

\begin{table}
\caption{Sizes of the ILPs for a subset of the data in \cite{DobrinZwicklSanderson2018}.}
\label{tab:ILPsizes}
\begin{center}
\begin{tabular}{|l|c|c|c|} \hline
& ~Number of~ & ~Number of~ & ~Number of~ \\
Data Set & Rows & Columns & ~Nonzero Entries~ \\ \hline
Bats & 1080 & 3376 & 34884 \\
Birds (complete) & 7292 & 28128 & 273416 \\
Eucalyptus & 194 & 568 & 2680 \\
Ficus & 161 & 468 & 2548 \\
Insects &  4459 & 2492 & 529140 \\
Iris & 195 & 572 & 3320 \\
Mammals & 407 & 780 & 34688 \\ 
Primates & 1087 & 1804 & 91356 \\
Saxifragales & 1409 & 3988 & 30060 \\ \hline
\end{tabular}
\end{center}
\label{default}
\end{table}%

\begin{table}
\caption{Solution times for the ILPs listed in Table \ref{tab:ILPsizes}.}
\label{tab:ILPtimes}
\begin{center}
\begin{tabular}{|l|c|} \hline
& ~Execution Time~  \\
 Data Set & (seconds)  \\ 
\hline
Bats & 0.098 \\
Birds (complete) & 0.03 \\
Eucalyptus & 0.002 \\
Ficus & 0.0009999 \\
Insects & 5.902 \\
Iris & 0.002 \\
Mammals & 0.091 \\
Primates & 0.013 \\
Saxifragales & 0.004 \\ \hline
\end{tabular}
\end{center}
\label{default}
\end{table}%

To produce decisive submatrices, multiple ILPs were solved. The average time taken per ILP for some of the data sets is as follows:
\begin{itemize}
\item Birds (complete): 0.59 seconds.
\item Bats: 0.09 seconds.
\item Primates: 0.18 seconds.
\end{itemize}

\end{document}